\newtheorem{proposition}{Proposition}
\newtheorem{assumption}{Assumption}
\newtheorem{remark}{Remark}
\newcommand{\dd}{\mathop{}\!\mathrm{d}}
\let\originalleft\left
\let\originalright\right
\def\left#1{\mathopen{}\originalleft#1}
\def\right#1{\originalright#1\mathclose{}}
\newcommand\norm[1]{\left\lVert#1\right\rVert}
\begin{document}

\title{Performance Measures in Electric Power Networks under Line Contingencies}
\author{Tommaso~Coletta, and Philippe~Jacquod,~\IEEEmembership{Member,~IEEE}
\thanks{T. Coletta and Ph. Jacquod are with the School of Engineering of the University of Applied Sciences of Western Switzerland
CH-1951 Sion, Switzerland. Emails: (tommaso.coletta, philippe.jacquod)@hevs.ch.}
}

\markboth{THIS VERSION: April 17 2019}%
{Coletta \MakeLowercase{\textit{et al.}}: Performance Measures in Electric Power Networks under Line Contingencies}

\maketitle

\begin{abstract}
Classes of performance measures expressed in terms of ${\cal H}_2$-norms 
have been recently introduced to quantify the response of
coupled dynamical systems to external perturbations. So far,
investigations of these performance measures have been restricted to nodal perturbations.
Here, we go beyond these earlier works and consider the equally important, but so far neglected case of line perturbations.
We consider a network-reduced power system, where a Kron reduction has eliminated passive buses. Identifying
the effect that a line fault in the physical network has on the Kron-reduced network, we find that performance measures
depend on whether the faulted line connects two passive, two active buses or one active to one passive bus. In
all cases, performance measures depend quadratically on the original load on the faulted line
times a topology dependent factor. Our theoretical 
formalism being restricted to Dirac-$\delta$  perturbations, we investigate numerically the validity of our results for finite-time 
line faults.  For uniform damping over inertia ratios,
we find good agreement with theoretical predictions for longer fault durations in systems with more inertia, for which 
eigenmodes of the network are harder to excite.
\end{abstract}
\begin{IEEEkeywords}
Power generation control; electric power networks; nonlinear systems; networked control systems.
\end{IEEEkeywords}


\section{Introduction}

In normal operation, electric power grids are synchronized. Their operating state corresponds to equal frequencies and
voltage angle differences ensuring power conservation at all buses. Such synchronous states are not specific to power grids. 
They occur in many different coupled dynamical systems, depending on the balance between 
the internal dynamics of the individual systems and the coupling between them~\cite{Kur75,Pik01}. 
For the specific case of electric power grids, the individual systems are either generators or loads, and they are coupled to one 
another by power lines. Individual units have effective internal
dynamics determined by their nature -- they may be rotating machines, inertialess new renewable energy sources, 
load impedances and so forth -- and by the amount of power they generate or consume~\cite{Bialek08}. 
Rapid changes are currently affecting the structure of power grids which will no doubt impact their operating states~\cite{AnnualEnOutlook}. 
With higher penetration
levels of renewable energy sources, productions become decentralized, they have less inertia, they fluctuate more 
strongly~\cite{Bac13}, and are more vulnerable~\cite{Rohden2014}. It is expectable that
future power grids will be subjected more often to stronger external perturbations to which they may react more
strongly.

There is thus a clear need to better assess power grid vulnerability.
An assessment protocol has been advocated
in consensus and synchronization studies~\cite{Bamieh12,Summers15,Siami14,Fardad14},
which starts from a stable operating state, perturbs it and quantifies the magnitude of the induced transient
excursion through various performance measures. 
Focusing on Dirac-$\delta$, nodal perturbations -- instantaneous changes in generation or consumption --
performance measures have been proposed, which
can be formulated as the $\mathcal{L}_2$-norm of a performance output or 
the squared $\mathcal{H}_2$-norm of the input/output map.
The approach is mathematically elegant because these norms can be expressed in terms of 
observability Gramians~\cite{Zhou96}. 
Exported to electric power grids, performance measures evaluate additional  
transmission losses incurred during the transient as synchronous machines oscillate relative to one another~\cite{bamieh2013price, Gayme15, Gayme16}
or the primary control effort necessary to restore synchrony~\cite{Poola17}. 
Quite interestingly, \cite{Gayme16} relates additional  
transmission losses  to a graph theoretical distance metric known as the resistance distance~\cite{Klein93,Stephenson89}.
To the best of our knowledge, investigations of performance measures of synchronized states have considered
nodal perturbations only. In this manuscript, we extend these investigations
to line contingencies which are at least as important for evaluating the robustness of electric power grids. 

In the case of a line contingency, the disturbance acts on the network Laplacian matrix and is thus 
a multiplicative perturbation, a priori fundamentally different and harder to treat than the additive nodal perturbations
considered so far. 
For the case of a Dirac-$\delta$ perturbation, we show below how it can be recast as an effective, tractable additive perturbation. 
A second difficulty we overcome is that so far, analytical results for quadratic performance measures
have been obtained only for 
networks where static, passive nodes have been algebraically suppressed by Kron reduction~\cite{Dorfler13Kron}.
This is not a problem for nodal perturbations, however for the line contingencies of interest here a true line fault
in the physical network including passive nodes translates differently into the Kron-reduced network, 
depending on whether the faulted line connects two passive nodes, 
two active nodes, or one passive to one active node in the physical network. 
Below, we therefore map single-line faults in the physical network onto the Kron-reduced network.
We obtain results that differentiate between the three cases just mentioned. Finally, 
because our analytical approach is restricted to Dirac-$\delta$ perturbations, 
we compare our theory to numerical simulations with finite-time line faults. We confirm our analytical results
for faults lasting typically up to few AC cycles.  For uniform damping over inertia ratios, we find that the agreement between theory
and numerics is better for longer fault durations in systems with larger inertias  for which 
eigenmodes of the network are harder to excite.

References~\cite{bamieh2013price, Gayme15, Gayme16,Poola17} focus on performance measures 
which can be expressed as $\mathcal{H}_2$-norms for a state-space system. 
Generally, the observability Gramian required to evaluate 
$\mathcal{H}_2$-norms is defined implicitly by a Lyapunov equation which is typically solved numerically.
An additional contribution of our work is to derive an explicit solution of the Lyapunov equation,
under the assumption that synchronous machines have uniform damping over inertia ratios.
The Laplacian systems we consider have a marginally stable mode. It is well known that the presence of this mode may
lead to divergences in the observability Gramian, and the standard way to avoid these divergences is to 
chose the output orthogonal to that mode. Inspired by well established {\it regularization} procedures in theoretical physics~\cite{Itz80},
we construct an alternative approach which temporarily stabilizes that mode by adding a small, positive value $\epsilon$ 
to each eigenvalue of the Laplacian until the end of the calculation, at which point we send $\epsilon$ to zero. 
The procedure 
cancels removable singularities, furthermore it renders our explicit solution of the Lyapunov equation
mathematically rigorous in all cases as long as $\epsilon$ is finite. It
clarifies the origin of divergences, when they cancel out and when they do not. 
In the latter case the procedure is not successful, indicating that the problem is ill-posed. 
Because the procedure is
imported from a different field of research than the standard readership of this journal, we spend some time to describe it below. 

This paper is organized as follows. Section \ref{sec:Notations} introduces the mathematical notations
and defines the effective resistance distance.
The high voltage AC electric network model, and the observability Gramian formalism are outlined in
Section~\ref{sec:Model}. Section~\ref{sec:Observability Gramian expr} derives a closed-form expression for the 
observability Gramian in general cases.
The new application of the Gramian formalism to 
line contingencies is discussed in Section~\ref{sec:Line contingencies, formalism} and is supported by the numerical simulations presented in 
Section~\ref{sec:Numerics}. A brief conclusion is given in Section~\ref{sec:Conclusion}.

\section{General mathematical properties and notation}\label{sec:Notations}
Given the vector ${\bm v}\in\mathbb{R}^N$ and the matrix ${\bm M}\in \mathbb{R}^{N\times N}$
we denote their transpose by ${\bm v}^\top$ and ${\bm M}^\top$.
For any two vectors ${\bm u}, {\bm v}\in \mathbb{R}^N$, ${\bm u}{\bm v}^\top$ is the matrix in $\mathbb{R}^{N\times N}$
having as $i,j^\textrm{th}$ component the scalar $u_iv_j$ and
$\textrm{diag}(\{v_i\})$ denotes the diagonal matrix having $v_1,\ldots,v_N$ as diagonal entries.
Let ${\bm e}_l\in \mathbb{R}^{N}$ with $l\in\{1,\ldots,N\}$ denote the unit vector 
with components $({e}_l)_i = \delta_{il}$ with the Kronecker symbol $\delta_{il}=0$ if $i \ne l$, $\delta_{ii}=1$.
We define ${\bm e}_{(l,q)} = {\bm e}_l-{\bm e}_q \in \mathbb{R}^{N}$.

We denote undirected weighted graphs by $\mathcal{G}=(\mathcal{N},\mathcal{E}, \mathcal{W})$ where 
$\mathcal{N}$ is the set of its $N$ vertices,
$\mathcal{E}$ is the set of edges, and $\mathcal{W}=\{w_{ij}\}$ is the set of edge weights,
with $w_{ij}=0$ whenever $i$ and $j$ are not connected by an edge,
and $w_{ij}=w_{ji}>0$ otherwise.
The graph Laplacian ${\bm L}\in\mathbb{R}^{N\times N}$ is the symmetric matrix given by 
${\bm L} = \sum_{i<j}w_{ij}{\bm e}^{\phantom{\top}}_{(i,j)}{{\bm e}^\top_{(i,j)}}$.
We denote by $\{\lambda_1,\ldots\lambda_N\}$ and $\{{\bm u}^{(1)},\ldots,{\bm u}^{(N)}\}$ the eigenvalues and orthonormalized eigenvectors of $\bm L$.
The zero row and column sum property of $\bm L$ implies that $\lambda_1=0$ and that $({\bm u}^{(1)})^\top=(1,\ldots,1)/\sqrt{N}$.
All remaining eigenvalues of $\bm L$ are strictly positive in connected graphs, $\lambda_i>0$ for $i=2,\ldots,N$.
The orthogonal matrix $\bm T\in \mathbb{R}^{N\times N}$ having ${\bm u}^{(i)}$ as $i^\textrm{th}$ column diagonalizes $\bm L$,
i.e. ${\bm T}^\top {\bm L} {\bm T} = \textrm{diag}(\{\lambda_i\})$.
The Moore-Penrose pseudoinverse of ${\bm L}$ is given by ${\bm L}^\dagger = {\bm T}\textrm{diag}(\{0,\lambda_2^{-1},\ldots,\lambda_N^{-1}\}){\bm T}^\top$
and is such that ${\bm L}{\bm L}^\dagger={\bm L}^\dagger{\bm L}=\mathbb{I}-{\bm u}^{(1)}{{\bm u}^{(1)}}^\top$
with $\mathbb{I}\in\mathbb{R}^{N\times N}$ denoting the identity matrix.

The effective resistance distance between any two nodes $i$ and $j$ of the network is 
defined as~\cite{Klein93, Stephenson89} 
\begin{equation}
\Omega_{ij}={{\bm e}^\top_{(i,j)}} {\bm L}^\dagger{\bm e}^{\phantom{\top}}_{(i,j)} \, .
\end{equation} 
This quantity is a graph theoretical
distance metric satisfying the properties: i) $\Omega_{ii}=0~\forall i\in\mathcal{N}$, 
ii) $\Omega_{ij}\geq0~\forall i\neq j\in\mathcal{N}$,
and iii) $\Omega_{ij}\leq\Omega_{ik}+\Omega_{kj}~\forall i,j,k\in\mathcal{N}$.
It is known as the \textit{resistance} distance because if one replaces the edges of $\mathcal{G}$ by resistors
with a conductance $1/R_{ij}=w_{ij}$, then $\Omega_{ij}$ is equal to the
equivalent network resistance when a current is injected at node $i$ and extracted at node $j$ with no injection anywhere
else. The expression of the resistance distance in terms of the eigenvalues and eigenvectors of $\bm L$ is 
given by~\cite{klein1997graph, xiao2003resistance, coletta17}
\begin{equation}
\Omega_{ij}=\sum_{l\geq2}\lambda_l^{-1}(u_{i}^{(l)}-u_{j}^{(l)})^2 \, .
\end{equation}
\section{Power Network Model and quadratic performance measures}\label{sec:Model}
We consider the dynamics of high voltage transmission power networks in the DC power flow approximation.
This approximation of the full nonlinear dynamics assumes uniform and constant voltage magnitudes, 
purely susceptive transmission lines and small voltage angle differences between connected nodes. 
It is standardly justified in very high voltage transmission grids where line conductances are 
typically ten times smaller than line susceptances, and which operate at angle differences smaller than 
$\sim 30^o$, justifying a linearization of the power flow they carry, $\sin(\theta_i-\theta_j) \simeq \theta_i-\theta_j$.
Accordingly, 
the steady state power flow equations {\color{blue} linearly} relate the active power injections $\bm P$ 
to the voltage angles $\bm \theta$ at every node, $\bm P ={\bm L}_{\bm b} \bm \theta$. 
Here, ${\bm L}_{\bm b}$ is the Laplacian matrix of 
the electric network, with a subscript indicating that its edge weights are given by the susceptances of the transmission lines,
$w_{ij}=b_{ij}\geq0$.
We assume that each node of the network has a synchronous machine (generator or consumer) of 
rotational inertia $m_i>0$ and damping coefficient $d_i>0$.
We call such nodes {\it active nodes}.
For constant voltages, the dynamics is governed by the \textit{swing} equations \cite{Bialek08}.
In the frame rotating at the nominal frequency they read
\begin{equation}\label{eq:Swing}
{\bm M}\ddot{{\bm \theta}} = -{\bm D}\dot{{\bm \theta}} + {\bm P} - {{\bm L}_{\bm b}}{\bm \theta},
\end{equation}
with ${\bm M}=\textrm{diag}(\{m_i\})$ and $\bm D = \textrm{diag}(\{d_i\})$.
Subject to a power injection perturbation ${\bm P} \rightarrow {\bm P}+{\bm p}(t)$, 
the system deviates from the nominal operating point 
$({\bm \theta}^\star, {\bm\omega}):=({{\bm L}^{\dagger}_{\bm b}}{\bm P},0)$ 
according to ${\bm \theta}(t)={\bm\theta}^\star+{\bm \varphi}(t)$, and
${\bm \omega}(t)=\dot{\bm \varphi}(t)$, where ${\bm \varphi}(t)$ and 
${\bm \omega}(t)$ measure angle and angular frequency deviations from the nominal operating point.
Following \cite{paganini2017global}, we
use $\overline{{\bm \varphi}}={\bm M}^{1/2}{\bm \varphi}$ and
$\overline{{\bm \omega}}={\bm M}^{1/2}{\bm \omega}$ to rewrite \eqref{eq:Swing} as 
\begin{equation}\label{eq:Swing3}
      \bigg[\begin{IEEEeqnarraybox*}[][c]{,c,}
       \dot{\overline{\bm \varphi}} \\
       \dot{\overline{\bm \omega}}
      \end{IEEEeqnarraybox*}\bigg]=
       \underbrace{\bigg[\begin{IEEEeqnarraybox*}[][c]{,c/c,}
        0 & \mathbb{I} \\
        -{\bm M}^{-1/2}{\bm L}_{\bm b}{\bm M}^{-1/2} & -{\bm M}^{-1}{\bm D} 
      \end{IEEEeqnarraybox*}\bigg]}_{\bm A}
      \bigg[\begin{IEEEeqnarraybox*}[][c]{,c,}
        \overline{\bm \varphi} \\
        \overline{\bm \omega} 
      \end{IEEEeqnarraybox*}\bigg]
      +\bigg[\begin{IEEEeqnarraybox*}[][c]{,c,}
        0 \\
        {\bm M}^{-1/2}{\bm p}
      \end{IEEEeqnarraybox*}\bigg]\,,
\end{equation}
which symmetrizes the four blocks of the stability matrix ${\bm A}$ connecting angle and angular frequency deviations
to their derivatives in \eqref{eq:Swing3} and simplifies the eigenbasis decomposition of $\bm A$ described below.
Equation~\eqref{eq:Swing3} captures the transient dynamics resulting from the 
perturbation ${\bm p}(t)$.
For asymptotically stable systems and perturbations that are short and weak enough that they leave the dynamics inside the basin of attraction 
of $\bm \theta^\star$, the operating point will eventually return to $(\bm \varphi,\bm \omega)=(0,0)$.

We want to characterize the transient by evaluating quadratic performance measures generically given by
\begin{equation}\label{eq:objective function}
\mathcal{P}= \int_{0}^{\infty}
 \big[\begin{IEEEeqnarraybox*}[][c]{,c,}
        \bm \varphi^\top
        \bm \omega^\top 
      \end{IEEEeqnarraybox*}\big]{\bm Q}
 \bigg[\begin{IEEEeqnarraybox*}[][c]{,c,}
        \bm \varphi \\
        \bm \omega 
      \end{IEEEeqnarraybox*}\bigg]\dd t\,, \;
      \bm Q=\bigg[\begin{IEEEeqnarraybox*}[][c]{,c/c,}
        {\bm Q}^{(1,1)} & 0 \\
        0 & {\bm Q}^{(2,2)} 
      \end{IEEEeqnarraybox*}\bigg] \, , 
\end{equation}
where the precise form of the symmetric matrix 
$\bm Q\in\mathbb{R}^{2N\times 2N}$ depends on the specific performance measure to investigate and
we assumed ${\bm p}(t<0)=0$. For Dirac-$\delta$ perturbations ${\bm p}(t)=\bm p_0 \, \delta(t)$
and initial conditions $(\bm \varphi(0),\bm \omega(0))=(0,0)$, \eqref{eq:Swing3} is explicitly solved yielding
\begin{equation}\label{eq: Definition B}
  \bigg[\begin{IEEEeqnarraybox*}[][c]{,c,}
        \overline{\bm \varphi}(t) \\
        \overline{\bm \omega}(t) 
      \end{IEEEeqnarraybox*}\bigg]=e^{{\bm A} t} \bigg[\begin{IEEEeqnarraybox*}[][c]{,c,}
        0 \\
        {\bm M}^{-1/2}{\bm p_0}
        \end{IEEEeqnarraybox*}\bigg] =: e^{{\bm A} t} \, {\bm B} \,.
\end{equation}
The performance measure $\mathcal{P}$ \eqref{eq:objective function} can be expressed as 
\begin{equation}\label{eq:objective function 2}
\mathcal{P}={\bm B}^\top \bm X {\bm B}\,,
\end{equation}
with the observability Gramian $ {\bm X} = \int_{0}^{\infty}e^{{\bm A}^\top t} \bm Q^\textrm{M} e^{{\bm A}t} \dd t$,
and
\begin{equation}
 {\bm Q}^\textrm{M} = \bigg[\begin{IEEEeqnarraybox*}[][c]{,c/c,}
        {\bm M}^{-1/2}{\bm Q}^{(1,1)}{\bm M}^{-1/2} & 0 \\
        0 & {\bm M}^{-1/2}{\bm Q}^{(2,2)}{\bm M}^{-1/2} 
      \end{IEEEeqnarraybox*}\bigg]\,.
\end{equation}

When the system \eqref{eq:Swing3} is asymptotically stable, the observability Gramian $\bm X$ satisfies
the Lyapunov equation
\begin{equation}\label{eq:Lyapunov equation}
{\bm A}^\top {\bm X} + {\bm X} {\bm A} = -{\bm Q}^\textrm{M}\,,
\end{equation}
The matrix $\bm A$ defined in \eqref{eq:Swing3} depends on the 
Laplacian matrix ${\bm L}_{\bm b}$, which satisfies $\sum_i ({\bm L}_{\bm b})_{ij}=\sum_j ({\bm L}_{\bm b})_{ij}=0$.
Accordingly, ${\bm A}$ has a marginally stable mode, ${\bm A}[{\bm M}^{1/2}{{\bm u}^{(1)}}, 0]^\top=0$.
The standard approach to deal with this mode is to consider performance measures
such that ${\bm u}^{(1)}\in\textrm{ker}(\bm Q^{(1,1)})$, in which case the observability Gramian is 
well defined by \eqref{eq:Lyapunov equation} with the additional constraint ${\bm X}[{\bm M}^{1/2}{{\bm u}^{(1)}}, 0]^\top=0$
\cite{bamieh2013price, Gayme16, Poola17}. 
Alternatively, one may subtract any homogeneous
angle shift $\phi_i(t) \rightarrow \phi_i(t) - \langle \phi(t) \rangle$ with $\langle \phi(t) \rangle = n^{-1} \sum_i \langle \phi(t) \rangle$
via a transformation of coordinates~\cite{Wu16}.
Here we instead construct a new mathematical approach for calculating quadratic performance 
measures of the form given in \eqref{eq:objective function}. Its construction will be detailed below, and  we briefly 
summarize it. Our formalism starts with a
formal stabilization of the marginally stable mode of $\bm A$
by adding an identical shift, parametrized by $\epsilon > 0$, to all eigenvalues of the Laplacian,
${\bm L}_{\bm b}\rightarrow {\bm L}_{\bm b}+\epsilon\mathbb{I}$, without changing 
the corresponding eigenvectors. In particular, this shift renders
asymptotically stable the marginally stable mode, but does not change its structure. 
At all intermediate calculational steps, the observability Gramian can be calculated from the Lyapunov 
equation \eqref{eq:Lyapunov equation} without further constraint, which in particular 
gives closed-form expressions from which the observability Gramian $\bm X$ is easily computed.
We remove the regularizing term $\epsilon \mathbb{I}$
at the end of the calculation and the procedure, common in theoretical physics~\cite{Itz80}, 
is successful only if the final result is finite. Among others, we will see below that this procedure
rigorously treats cases where the Gramian has removable singularities, as will be illustrated below. 

\begin{proposition}\label{prop:Regularized inverse of L}
The Laplacian ${\mathbb L}_{\bm b}(\epsilon) = {\bm L}_{\bm b}+\epsilon\mathbb{I}$
with $\epsilon>0$ is non singular and its inverse is
\begin{equation}\label{eq:Regularized inverse of L}
{\mathbb L}^{-1}_{\bm b}(\epsilon)={\bm T}{\rm diag}(\{\epsilon^{-1},(\lambda_2+\epsilon)^{-1},\ldots,(\lambda_N+\epsilon)^{-1}\}){\bm T}^\top\,, 
\end{equation}
where $\lambda_i$'s and ${\bm T}$ are the eigenvalues and the orthogonal matrix diagonalizing ${\bm L}_{b}$.
\end{proposition}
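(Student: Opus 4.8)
The plan is to diagonalize $\mathbb{L}_{\bm b}(\epsilon)$ in the very same orthonormal eigenbasis that already diagonalizes ${\bm L}_{\bm b}$, so that the whole statement reduces to reading off a shifted spectrum. Since ${\bm T}$ is orthogonal, I would first rewrite the identity in that basis as $\mathbb{I} = {\bm T}{\bm T}^\top = {\bm T}\,\mathbb{I}\,{\bm T}^\top$, which lets me add $\epsilon\mathbb{I}$ to ${\bm T}\,\textrm{diag}(\{\lambda_i\})\,{\bm T}^\top = {\bm L}_{\bm b}$ without perturbing the eigenvectors. This produces the single crucial identity
\begin{equation*}
\mathbb{L}_{\bm b}(\epsilon) = {\bm L}_{\bm b} + \epsilon\mathbb{I} = {\bm T}\,\textrm{diag}(\{\lambda_i + \epsilon\})\,{\bm T}^\top \, ,
\end{equation*}
exhibiting $\mathbb{L}_{\bm b}(\epsilon)$ as diagonal in the eigenbasis of ${\bm L}_{\bm b}$, with shifted eigenvalues $\lambda_i + \epsilon$ and unchanged eigenvectors ${\bm u}^{(i)}$.

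Next I would establish non-singularity by inspecting this shifted spectrum. Recall from Section~\ref{sec:Notations} that $\lambda_1 = 0$ and $\lambda_i > 0$ for $i = 2, \ldots, N$. Adding $\epsilon > 0$ therefore lifts every eigenvalue to a strictly positive value $\lambda_i + \epsilon > 0$; in particular the marginal mode is raised from $0$ to $\epsilon$. Since no eigenvalue vanishes, $\det\mathbb{L}_{\bm b}(\epsilon) = \prod_i (\lambda_i + \epsilon) \neq 0$ and the matrix is invertible. This is precisely the regularization effect advertised above: the would-be zero eigenvalue that obstructs inversion of ${\bm L}_{\bm b}$ is rendered harmless for every finite $\epsilon$.

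Finally, I would exhibit the inverse and verify it directly. The natural candidate is ${\bm T}\,\textrm{diag}(\{(\lambda_i + \epsilon)^{-1}\})\,{\bm T}^\top$, which, upon substituting $\lambda_1 = 0$, coincides with the right-hand side of \eqref{eq:Regularized inverse of L}. To confirm it, I would multiply it by $\mathbb{L}_{\bm b}(\epsilon)$, collapse the inner factor ${\bm T}^\top{\bm T} = \mathbb{I}$ by orthogonality of ${\bm T}$, and use $\textrm{diag}(\{\lambda_i + \epsilon\})\,\textrm{diag}(\{(\lambda_i + \epsilon)^{-1}\}) = \mathbb{I}$ to recover ${\bm T}\,\mathbb{I}\,{\bm T}^\top = \mathbb{I}$.

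There is no genuine obstacle in this argument: it is a one-line spectral computation once the shift identity is written down, the entire content being the commutation of the $\epsilon\mathbb{I}$ term with the eigenbasis. The only point that requires care, and indeed the whole motivation for the construction, is that the argument breaks down exactly at $\epsilon = 0$, where the leading diagonal entry $\epsilon^{-1}$ diverges; keeping $\epsilon$ strictly positive throughout is what makes the inverse well defined and, as emphasized above, what renders the subsequent Lyapunov-equation manipulations rigorous until $\epsilon$ is sent to zero at the very end.
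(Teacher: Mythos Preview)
Your proposal is correct and follows essentially the same approach as the paper: exploit that $\epsilon\mathbb{I}$ shares the eigenbasis of ${\bm L}_{\bm b}$, so the spectrum shifts to $\lambda_i+\epsilon>0$, yielding non-singularity and the inverse by diagonalization. The paper's proof is more terse but the logic is identical.
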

\begin{proof}
Because the identity matrix $\mathbb I$ commutes with ${\bm L}_{\bm b}$,
${\mathbb L}_{\bm b}(\epsilon)$ has the same eigenvectors as 
${\bm L}_{\bm b}$ with shifted eigenvalues $\lambda_i\rightarrow\lambda_i+\epsilon$.
Because $\epsilon>0$,  all eigenvalues are strictly positive and ${\mathbb L}_{\bm b}(\epsilon)$ is non singular. Since the eigenvectors 
are unchanged,  ${\mathbb L}_{\bm b}(\epsilon)$ is diagonalized by the same matrix as ${\bm L}_{\bm b}$
and one obtains \eqref{eq:Regularized inverse of L} for its inverse.
\end{proof}

\begin{remark}\label{remark on inverses}
From \eqref{eq:Regularized inverse of L} we see that $\lim_{\epsilon \rightarrow 0} \bm v^\top \, {\mathbb L}^{-1}_{\bm b}(\epsilon) \,
\bm v = \bm v^\top \bm L_{\bm b}^\dagger \bm v$ if $\bm v$ is orthogonal to the marginally stable mode $\bm u^{(1)}$. We will 
make extensive use of that property below. 
\end{remark}

\begin{proposition}\label{prop:Marginal mode assumption}
 Under the transformation ${\bm L}_{\bm b}\rightarrow {\mathbb L}_{\bm b}(\epsilon) = {\bm L}_{\bm b}+\epsilon\mathbb{I}$ with $\epsilon>0$, the system
 defined by \eqref{eq:Swing3} is asymptotically stable and has no marginally stable mode. 
\end{proposition}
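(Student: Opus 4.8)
The plan is to prove the two assertions in one stroke by showing that every eigenvalue of the transformed stability matrix—that is, $\bm A$ with $\bm L_{\bm b}$ replaced by $\mathbb L_{\bm b}(\epsilon)$—has strictly negative real part. For a linear time-invariant system a marginally stable mode is exactly an eigenvalue with vanishing real part, so ruling those out simultaneously establishes asymptotic stability and the absence of any marginal mode, independently of the Jordan structure of the matrix.

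First I would reduce the $2N\times 2N$ block eigenvalue problem to a quadratic eigenvalue problem of size $N$. Writing an eigenvector as a stacked pair $(\bm x,\bm y)$ with $\bm x,\bm y\in\mathbb C^N$ and eigenvalue $\mu$, the upper block row reads $\bm y=\mu\bm x$ while the lower block row reads $-\bm K\bm x-\bm\Gamma\bm y=\mu\bm y$, where I abbreviate $\bm K:=\bm M^{-1/2}\mathbb L_{\bm b}(\epsilon)\bm M^{-1/2}$ and $\bm\Gamma:=\bm M^{-1}\bm D$. Eliminating $\bm y$ gives $(\mu^2\mathbb I+\mu\bm\Gamma+\bm K)\bm x=0$, and a one-line check shows $\bm x\neq 0$, since $\bm x=0$ would force $\bm y=\mu\bm x=0$ and the eigenvector would vanish.

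Next I would exploit the definiteness of the coefficient matrices. By Proposition~\ref{prop:Regularized inverse of L} the shifted Laplacian $\mathbb L_{\bm b}(\epsilon)$ has smallest eigenvalue $\epsilon>0$ and is therefore positive definite, so the congruent matrix $\bm K$ is symmetric positive definite; likewise $\bm\Gamma$ is diagonal with strictly positive entries $d_i/m_i$ and hence positive definite. Left-multiplying the quadratic eigenvalue equation by the Hermitian conjugate $\bm x^{*}$ then collapses it to the scalar relation $a\mu^2+b\mu+c=0$ with $a=\bm x^{*}\bm x$, $b=\bm x^{*}\bm\Gamma\bm x$, and $c=\bm x^{*}\bm K\bm x$, all three strictly positive real numbers.

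Finally I would invoke the degree-two Routh--Hurwitz criterion: a real quadratic with strictly positive coefficients has both roots in the open left half plane. Explicitly, if the discriminant is nonnegative the two real roots have sum $-b/a<0$ and product $c/a>0$ and are thus both negative, while if it is negative the complex-conjugate roots have common real part $-b/(2a)<0$; in either case $\operatorname{Re}(\mu)<0$. Since this holds for every eigenvalue, the conclusion follows. I do not expect a genuine obstacle here, only the bookkeeping that guarantees $c>0$: this is precisely where the regularization is indispensable, because without the shift the marginal eigenvector $\bm M^{1/2}\bm u^{(1)}$ yields $c=0$, collapsing the quadratic to $a\mu^2+b\mu=0$ and restoring the zero eigenvalue $\mu=0$. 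Invoking Proposition~\ref{prop:Regularized inverse of L} to secure the strict positivity of $\bm K$ is therefore the load-bearing step.
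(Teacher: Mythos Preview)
Your argument is correct and complete. The paper's proof, by contrast, is extremely terse: it observes via Proposition~\ref{prop:Regularized inverse of L} that $\mathbb L_{\bm b}(\epsilon)$ has strictly positive spectrum, and then simply cites \cite{Manik2014,coletta2016linear} for the conclusion that all eigenvalues of $\bm A$ have strictly negative real part. You instead supply the argument those references contain: reducing the $2N\times 2N$ eigenvalue problem to the quadratic pencil $\mu^2\mathbb I+\mu\bm\Gamma+\bm K$, projecting onto an eigenvector to obtain a scalar quadratic with strictly positive real coefficients, and invoking Routh--Hurwitz. Your route is self-contained and more informative, and it works for arbitrary positive $d_i,m_i$ without needing Assumption~\ref{ass:Homogeneous inertia damping ratio}; the paper's route is shorter on the page but defers the substance to the literature. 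Both rest on the same load-bearing fact you correctly identify, namely that the $\epsilon$-shift makes $\bm K$ strictly positive definite and thereby eliminates the $c=0$ degeneracy responsible for the marginal mode.
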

\begin{proof}
 In Proposition~\ref{prop:Regularized inverse of L}, we 
 showed that for $\epsilon>0$, ${\mathbb L}_{\bm b}(\epsilon)$ has  strictly positive eigenvalues. 
 Under this condition \cite{Manik2014} and \cite{coletta2016linear}
 showed that all eigenvalues of $\bm A$ have a strictly negative real part. Therefore the system
 defined by \eqref{eq:Swing3} is asymptotically stable.
\end{proof}

Under the transformation ${\bm L}_{\bm b}\rightarrow {\mathbb L}_{\bm b}(\epsilon) ={\bm L}_{\bm b}+\epsilon\mathbb{I}$, \eqref{eq:Lyapunov equation} defines the observability Gramian with no additional constraint for $\epsilon >0$. 
We can then calculate $\bm X$ for any performance measure. We
take the limit $\epsilon\rightarrow0$ only at the end of the calculation. In the absence of any singularity, the procedure 
reproduces the results of standard approaches when the latter work. 
Additionally, our regularization procedure takes care of removable singularities when present, and therefore is able to treat cases
beyond the reach of the standard approaches.
This will be illustrated below.

Defining a system output
${\bm y}(t)=\sqrt{{\bm Q}^\textrm{M}}[\overline{\bm \varphi}(t),\overline{\bm \omega}(t)]^\top$,
\eqref{eq:Swing3} and \eqref{eq: Definition B} together with ${\bm y}(t)$ define an input/output system which we denote 
by $G=(\bm A,\bm B, \sqrt{{\bm Q}^\textrm{M}})$. 
For a specific perturbation, the quadratic performance measure $\mathcal{P}$ of \eqref{eq:objective function}
is the $\mathcal{L}_2$-norm $\mathcal{P}=\int_0^\infty {\bm y}^\top(t){\bm y}(t)\dd t\equiv\norm{\bm y}_{\mathcal{L}_2}$.
Furthermore, the squared $\mathcal{H}_2$-norm 
$\norm{G}_{\mathcal{H}_2}^2$ measures the sum of the system's responses to Dirac-$\delta$ impulses at every node,
$\norm{G}_{\mathcal{H}_2}^2=\sum_{i}\mathcal{P}^{(i)}$,
where $\mathcal{P}^{(i)}$ is the $\mathcal{L}_2$-norm of the system's output for a single Dirac-$\delta$ impulse at node $i$,
${\bm p}^{(i)}(t)=\delta(t){\bm e}_i$. 

In this formalism, the difficult step is to solve the Lyapunov equation
for $\bm X$. Despite some specific choices of $\bm Q$ 
for which analytical solutions have been found~\cite{Gayme15,Gayme16,Poola17}, this task is generally 
performed numerically.
For generic performance measures, there is no formal solution to the Lyapunov equation.
Section \ref{sec:Observability Gramian expr} fills this gap by explicitly deriving an analytical expression for the observability Gramian in terms of the eigenvectors
of ${\bm M}^{-1/2}\bm L_{\bm b}{\bm M}^{-1/2}$ for additive perturbations. 

\section{Closed form expression for the Observability Gramian}\label{sec:Observability Gramian expr}
\begin{proposition}\label{prop:Sol Lyapunov eq}
 Let $\bm A$ be a non symmetric, diagonalizable matrix with eigenvalues ${\rm Re} (\mu_i) < 0~\forall i$. Let 
 ${\bm T}_R$ (${\bm T}_L$) denote the matrix whose columns (rows) are the right (left) eigenvectors of $\bm A$.
 The observability Gramian $\bm X$, solution of the Lyapunov equation~\eqref{eq:Lyapunov equation} is given by 
 \begin{equation}\label{eq:Solution of lyapunov equation}
 X_{ij} = \sum_{l,q}\frac{-1}{\mu_l+\mu_q}(T_L)_{li}(T_L)_{qj}\left( {\bm T}_R^\top \bm Q^\textrm{M} {\bm T}_R\right)_{lq}\,.
\end{equation}
\end{proposition}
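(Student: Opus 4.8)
The plan is to start from the integral representation of the observability Gramian already recorded in the excerpt, namely $\bm X = \int_0^\infty e^{\bm A^\top t}\,\bm Q^{\rm M}\, e^{\bm A t}\,\dd t$, which is the unique solution of the Lyapunov equation~\eqref{eq:Lyapunov equation} whenever every eigenvalue of $\bm A$ has strictly negative real part. Since $\bm A$ is assumed diagonalizable, I would write $\bm A = \bm T_R\,{\rm diag}(\{\mu_i\})\,\bm T_L$ with $\bm T_L = \bm T_R^{-1}$. The single structural fact I need is the biorthogonality relation $\bm T_L \bm T_R = \mathbb{I}$: the rows of $\bm T_L$ (the left eigenvectors) and the columns of $\bm T_R$ (the right eigenvectors) form a biorthogonal system under the normalization implicit in defining $\bm T_L$ as the inverse of $\bm T_R$. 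This lets me exponentiate the diagonalization as $e^{\bm A t} = \bm T_R\,{\rm diag}(\{e^{\mu_i t}\})\,\bm T_L$ and, taking transposes, $e^{\bm A^\top t} = \bm T_L^\top\,{\rm diag}(\{e^{\mu_i t}\})\,\bm T_R^\top$.

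Substituting these into the integral and pulling the time-independent factors $\bm T_L^\top$, $\bm T_R^\top\bm Q^{\rm M}\bm T_R$ and $\bm T_L$ outside, the integrand reduces entrywise to $\tilde Q_{lq}\,e^{(\mu_l+\mu_q)t}$, where $\tilde Q_{lq}=(\bm T_R^\top\bm Q^{\rm M}\bm T_R)_{lq}$. The remaining work is the scalar integral $\int_0^\infty e^{(\mu_l+\mu_q)t}\,\dd t = -1/(\mu_l+\mu_q)$, which converges precisely because ${\rm Re}(\mu_l+\mu_q)<0$ for all $l,q$; this is exactly where the hypothesis ${\rm Re}(\mu_i)<0$ enters, and it simultaneously guarantees $\mu_l+\mu_q\neq 0$ so that no denominator vanishes. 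Reassembling indices, and using $(\bm T_L^\top)_{il}=(T_L)_{li}$, then yields $X_{ij}=\sum_{l,q}\frac{-1}{\mu_l+\mu_q}(T_L)_{li}(T_L)_{qj}(\bm T_R^\top\bm Q^{\rm M}\bm T_R)_{lq}$, which is exactly~\eqref{eq:Solution of lyapunov equation}.

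As an independent check on the sign and index conventions, I would also verify directly that the claimed $\bm X$ solves the Lyapunov equation and invoke uniqueness of the stable solution. Writing $\bm X = \bm T_L^\top \bm S\,\bm T_L$ with $S_{lq}=-\tilde Q_{lq}/(\mu_l+\mu_q)$ and repeatedly using $\bm T_L\bm T_R=\mathbb{I}$, one finds $\bm A^\top\bm X + \bm X\bm A = \bm T_L^\top(\bm\Lambda\bm S+\bm S\bm\Lambda)\bm T_L$ with $\bm\Lambda={\rm diag}(\{\mu_i\})$. The entrywise identity $(\mu_l+\mu_q)S_{lq}=-\tilde Q_{lq}$ collapses the bracket to $-\bm T_R^\top\bm Q^{\rm M}\bm T_R$, whence $\bm A^\top\bm X+\bm X\bm A = -(\bm T_R\bm T_L)^\top\bm Q^{\rm M}(\bm T_R\bm T_L)=-\bm Q^{\rm M}$, again using $\bm T_R\bm T_L=\mathbb{I}$.

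I do not expect any serious obstacle: the argument is essentially the diagonalization of a Lyapunov operator, and everything past the eigenbasis change is a single scalar integral and bookkeeping. The one point requiring care is the normalization fixing $\bm T_L=\bm T_R^{-1}$, which must be stated explicitly so that $\bm T_L\bm T_R=\mathbb{I}$ holds; this is automatic for distinct eigenvalues but relies on the stated definition of $\bm T_L$ when eigenvalues are repeated. The convergence of the scalar integral, and hence the legitimacy of the integral representation itself, rests entirely on ${\rm Re}(\mu_i)<0$, which in the application of interest is guaranteed by Proposition~\ref{prop:Marginal mode assumption} once the regularization $\bm L_{\bm b}\to\mathbb{L}_{\bm b}(\epsilon)$ is in force.
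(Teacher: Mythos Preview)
Your proof is correct. Your primary route differs from the paper's: you evaluate the integral representation $\bm X=\int_0^\infty e^{\bm A^\top t}\bm Q^{\rm M}e^{\bm A t}\,\dd t$ by diagonalizing the exponentials and computing the scalar integrals $\int_0^\infty e^{(\mu_l+\mu_q)t}\,\dd t$, whereas the paper works purely algebraically, substituting $\overline{\bm X}=\bm T_R^\top\bm X\bm T_R$ into the Lyapunov equation itself to obtain the decoupled system $\bm\mu\,\overline{\bm X}+\overline{\bm X}\,\bm\mu=-\bm T_R^\top\bm Q^{\rm M}\bm T_R$, solved entrywise and transformed back via $\bm X=\bm T_L^\top\overline{\bm X}\bm T_L$. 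Your integral route makes the role of the stability hypothesis more transparent (it is exactly the convergence condition), while the paper's route avoids invoking the integral representation altogether. Amusingly, your ``independent check'' is essentially the paper's argument in reverse, so you have in fact reproduced both proofs.
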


\begin{proof}
By definition, ${\bm T}_L$ and ${\bm T}_R$ fulfill the bi-orthogonality condition ${\bm T}_L{\bm T}_R={\bm T}_R{\bm T}_L=\mathbb{I}$, 
and ${\bm T}_L {\bm A} {\bm T}_R = \bm \mu$
with $\bm \mu =\textrm{diag}(\{\mu_i\})$. Using this transformation in \eqref{eq:Lyapunov equation} one has
\begin{equation}
 {\bm \mu} \overline{\bm X} + \overline{\bm X} {\bm \mu} = -{\bm T}^\top_R{\bm Q}^\textrm{M}{\bm T}_R\,, \quad \overline{\bm X} = {\bm T}^\top_R {\bm X} {\bm T}_R\,,
\end{equation}
which yields
\begin{equation}\label{eq:pre-solution of lyapunov equation}
 \overline{X}_{lq} = \frac{-1}{\mu_l+\mu_q}({\bm T}^\top_R{\bm Q}^\textrm{M}{\bm T}_R)_{lq}\, ,
\end{equation}
which is always finite under the assumption that ${\color{blue}  {\rm Re} (\mu_i) < 0~\forall i}$.
Finally, using $\bm X = {\bm T}_L^\top\overline{\bm X}{\bm T}_L$ one obtains \eqref{eq:Solution of lyapunov equation}.
\end{proof}

\begin{remark}\label{remark on def of the obs Gramian}
Since Proposition \ref{prop:Sol Lyapunov eq} holds for ${\rm Re} (\mu_i) < 0~\forall i$, 
it is in particular satisfied by the matrix $\bm A$ defined in \eqref{eq:Swing3} with ${\bm L}_{\bm b} \rightarrow {\mathbb L}_{\bm b}(\epsilon) = {\bm L}_{\bm b}+\epsilon\mathbb{I}$ (see proof of Proposition~\ref{prop:Marginal mode assumption}).
The limit $\epsilon \rightarrow 0$ cannot be directly taken 
in \eqref{eq:Solution of lyapunov equation} and \eqref{eq:pre-solution of lyapunov equation} because the 
$l=q=1$ term has a vanishing denominator. 
Instead, we keep $\epsilon > 0$ to calculate performance measures. We
take the limit $\epsilon\rightarrow0$ only at the very end of the calculation. We will see that the limit 
is well defined for some, but not all choices of a performance measure and of a perturbation. 
\end{remark}

Next we relate the explicit expression \eqref{eq:Solution of lyapunov equation} of the observability Gramian 
to the eigenvectors of ${\bm M}^{-1/2}\mathbb L_{\bm b}(\epsilon){\bm M}^{-1/2}$.

\begin{assumption}\label{ass:Homogeneous inertia damping ratio}
 All synchronous machines have uniform damping over inertia ratios $d_i/m_i= \gamma>0~\forall i$.
 This is a standard assumption when analytically calculating performance measures
 such as those in \eqref{eq:objective function}~\cite{Bamieh12,bamieh2013price,Gayme15,Gayme16,Poola17,paganini2017global}. 
 Machine measurements indicate that the ratio $\gamma=d_i/m_i$ varies by about an order of magnitude from rotating machine to rotating 
 machine \cite{KouMachineReport}, so that this assumption, while strictly speaking not justified, is not  unrealistic. 
With this assumption, the matrix $\bm A$ in \eqref{eq:Swing3} can be rewritten as
 \begin{equation}
{\bm A}  \rightarrow 
 \bigg[\begin{IEEEeqnarraybox*}[][c]{,c/c,}
        0 & \mathbb{I} \\
        -{\bm M}^{-1/2}{\bm L}_{\bm b}{\bm M}^{-1/2} & -\gamma \mathbb{I} 
      \end{IEEEeqnarraybox*}\bigg] \, .
 \end{equation}
Following this, $\bm A$ is easily block-diagonalized in the eigenbasis of ${\bm M}^{-1/2}{\bm L}_{\bm b}{\bm M}^{-1/2}$
and its eigenvalues straightforwardly calculated as we next proceed to show. A perturbation theory valid for weak 
inhomogeneities in $d_i/m_i$ is currently under construction~\cite{Pag18}. We are unaware of any analytical treatement of the 
general case with arbitrary $d_i/m_i$.
\end{assumption}

\begin{proposition}\label{prop:Trasformation}
Consider the power system model defined in \eqref{eq:Swing3} 
with ${\bm L}_{\bm b} \rightarrow {\mathbb L}_{\bm b}(\epsilon) = {\bm L}_{\bm b}+\epsilon\mathbb{I}$, $\epsilon >0$ and
under Assumption \ref{ass:Homogeneous inertia damping ratio}.
The left and right transformation matrices 
 $\bm T_L$ and $\bm T_R$ diagonalizing ${\bm A}$,  
 \begin{equation}
{\bm T}_L {\bm A}{\bm T}_R = \bigg[\begin{IEEEeqnarraybox*}[][c]{,c/c,}
        {\rm diag}(\{ \mu_j^+\}) & 0 \\
        0 & {\rm diag}(\{ \mu_j^-\})
    \end{IEEEeqnarraybox*}\bigg]\,,
\end{equation}
with ${\bm T}_L {\bm T}_R=\mathbb{I}$,
are related to the matrix ${\bm T}_\textrm{M}$ whose columns are
 the eigenvectors of ${\bm M}^{-1/2}\mathbb L_{\bm b}(\epsilon){\bm M}^{-1/2}$ 
 through
\begin{equation}\label{eq:T_R}
 {\bm T}_R = \bigg[\begin{IEEEeqnarraybox*}[][c]{,c/c,}
        {\bm T}_\textrm{M} & 0 \\
        0 & {\bm T}_\textrm{M}
    \end{IEEEeqnarraybox*}\bigg]
    \bigg[\begin{IEEEeqnarraybox*}[][c]{,c/c,}
        {\rm diag}(\{1/\sqrt{\Gamma_j}\}) & {\rm diag}(\{\text{i}/\sqrt{\Gamma_j}\}) \\
        {\rm diag}(\{\mu_j^+/\sqrt{\Gamma_j}\}) & {\rm diag}(\{\text{i}\mu_j^-/\sqrt{\Gamma_j}\})
    \end{IEEEeqnarraybox*}\bigg]\,,\qquad
\end{equation}
and
\begin{equation}\label{eq:T_L}
 {\bm T}_L = 
 \bigg[\begin{IEEEeqnarraybox*}[][c]{,c/c,}
        {\rm diag}(\{-\mu_j^-/\sqrt{\Gamma_j}\}) & {\rm diag}(\{1/\sqrt{\Gamma_j}\}) \\
        {\rm diag}(\{-\text{i}\mu_j^+/\sqrt{\Gamma_j}\}) & {\rm diag}(\{\text{i}/\sqrt{\Gamma_j}\})
    \end{IEEEeqnarraybox*}\bigg]
 \bigg[\begin{IEEEeqnarraybox*}[][c]{,c/c,}
        {\bm T}^\top_\textrm{M} & 0 \\
        0 & {\bm T}^\top_\textrm{M}
    \end{IEEEeqnarraybox*}\bigg]\,,\qquad
\end{equation}
with 
\begin{equation}\label{eq:Eigenvalues A}
 \mu_j^{\pm}=\frac{1}{2}\left(-\gamma\pm\Gamma_j\right)\,,  \Gamma_j=\sqrt{\gamma^2-4\lambda_j^\textrm{M}}\,, 
  j=1, \ldots N \,.
\end{equation}
\end{proposition}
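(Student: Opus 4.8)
The plan is to exploit the block structure of ${\bm A}$ under Assumption~\ref{ass:Homogeneous inertia damping ratio} and reduce the full $2N\times 2N$ eigenproblem to $N$ decoupled $2\times2$ problems. First I would record that ${\bm M}^{-1/2}{\mathbb L}_{\bm b}(\epsilon){\bm M}^{-1/2}$ is real and symmetric and, being congruent to the positive definite matrix ${\mathbb L}_{\bm b}(\epsilon)$ for $\epsilon>0$ (Proposition~\ref{prop:Regularized inverse of L}), positive definite as well. Hence it is orthogonally diagonalized by ${\bm T}_\textrm{M}$, i.e. ${\bm T}_\textrm{M}^\top {\bm M}^{-1/2}{\mathbb L}_{\bm b}(\epsilon){\bm M}^{-1/2}{\bm T}_\textrm{M} = {\rm diag}(\{\lambda_j^\textrm{M}\})$ with $\lambda_j^\textrm{M}>0$ and ${\bm T}_\textrm{M}^\top={\bm T}_\textrm{M}^{-1}$. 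This positivity is exactly what later guarantees that the $\mu_j^\pm$ have strictly negative real part, as already used in Proposition~\ref{prop:Marginal mode assumption}.

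Next I would apply the orthogonal similarity ${\bm S}={\rm diag}({\bm T}_\textrm{M},{\bm T}_\textrm{M})$ to ${\bm A}$. A direct block computation, using ${\bm T}_\textrm{M}^\top{\bm T}_\textrm{M}=\mathbb{I}$, gives
\begin{equation}
{\bm S}^\top {\bm A}{\bm S} = \begin{bmatrix} 0 & \mathbb{I} \\ -{\rm diag}(\{\lambda_j^\textrm{M}\}) & -\gamma\mathbb{I}\end{bmatrix}\,,
\end{equation}
which, after the permutation that pairs index $j$ of the upper block with index $j$ of the lower block, is block diagonal with $N$ blocks ${\bm J}_j=\begin{bmatrix} 0 & 1 \\ -\lambda_j^\textrm{M} & -\gamma\end{bmatrix}$. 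I would then diagonalize each ${\bm J}_j$ by hand: its characteristic polynomial $\mu^2+\gamma\mu+\lambda_j^\textrm{M}=0$ yields precisely the eigenvalues $\mu_j^{\pm}$ of~\eqref{eq:Eigenvalues A}, with right eigenvectors $(1,\mu_j^{\pm})^\top$ and left eigenvectors $(-\mu_j^{\mp},1)$ (the latter using $\mu_j^{+}+\gamma=-\mu_j^{-}$). Undoing ${\bm S}$ and grouping the ``$+$'' eigenpairs ahead of the ``$-$'' eigenpairs reproduces the outer factors ${\rm diag}({\bm T}_\textrm{M},{\bm T}_\textrm{M})$ together with the inner $2\times2$ pattern of~\eqref{eq:T_R}--\eqref{eq:T_L}, up to the scalar normalizations.

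The delicate step, and the one I expect to be the main obstacle, is fixing those normalizations so that ${\bm T}_L{\bm T}_R=\mathbb{I}$ holds rather than mere block biorthogonality. The raw biorthogonal products evaluate to $\mu_j^{+}-\mu_j^{-}=\Gamma_j$ for the ``$+$'' pair and $\mu_j^{-}-\mu_j^{+}=-\Gamma_j$ for the ``$-$'' pair, with the cross terms vanishing identically. Dividing each member of a pair by $\sqrt{\Gamma_j}$ normalizes the ``$+$'' sector to $+1$, but leaves the ``$-$'' sector at $-1$; this is exactly why the factors of $\text{i}$ appear on the ``$-$'' columns of ${\bm T}_R$ and the ``$-$'' rows of ${\bm T}_L$, since $\text{i}^2=-1$ restores the normalization to $+1$. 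I would close by verifying directly that the resulting ${\bm T}_L,{\bm T}_R$ satisfy both ${\bm T}_L{\bm T}_R=\mathbb{I}$ and ${\bm T}_L{\bm A}{\bm T}_R={\rm diag}(\{\mu_j^{+}\})\oplus{\rm diag}(\{\mu_j^{-}\})$.

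Beyond this bookkeeping of the $\text{i}$ factors and the regrouping permutation, the one genuine caveat I would flag is the non-generic degeneracy $\Gamma_j=0$, i.e. critical damping $\gamma^2=4\lambda_j^\textrm{M}$, where ${\bm J}_j$ becomes a defective Jordan block, $\mu_j^{+}=\mu_j^{-}$, and the $1/\sqrt{\Gamma_j}$ normalization is singular. In that case ${\bm A}$ is not diagonalizable and the stated expressions break down, so the formulas~\eqref{eq:T_R}--\eqref{eq:T_L} are understood to hold whenever $\Gamma_j\neq0$ for all $j$; this is the generic situation and, in particular, is automatic in the strictly underdamped regime $\gamma^2<4\lambda_j^\textrm{M}$ where $\Gamma_j$ is purely imaginary.
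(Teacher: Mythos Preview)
Your proposal is correct and follows essentially the same route as the paper: block-diagonalize ${\bm A}$ via ${\rm diag}({\bm T}_\textrm{M},{\bm T}_\textrm{M})$, reduce to the $2\times2$ blocks $\left[\begin{smallmatrix}0&1\\-\lambda_j^\textrm{M}&-\gamma\end{smallmatrix}\right]$, and read off their eigenpairs, with the caveat $\Gamma_j\neq0$. You actually supply more detail than the paper does on the biorthogonal normalization and the origin of the factors of $\text{i}$, which the paper leaves implicit.
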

\begin{proof}
Under Assumption \ref{ass:Homogeneous inertia damping ratio}, 
one has ${\bm M}^{-1}\bm D=\gamma\mathbb{I}$. 
Thus ${\bm M}^{-1}\bm D$ and ${\bm M}^{-1/2}\mathbb L_{\bm b}(\epsilon){\bm M}^{-1/2}$
have a common basis of eigenvectors $\forall \epsilon >0$.
Since ${\bm M}^{-1/2}\mathbb L_{\bm b}(\epsilon){\bm M}^{-1/2}$ is symmetric, it has a real spectrum with eigenvalues
$\lambda^\textrm{M}_i=\lambda_i^\textrm{M}(\epsilon)$ with
\begin{equation}\label{eq:lambdamtm}
{\bm T}_\textrm{M}^\top\left({\bm M}^{-1/2}\mathbb L_{\bm b}(\epsilon){\bm M}^{-1/2}\right){\bm T}_\textrm{M} = 
{\bm \Lambda}_\textrm{M} :=\textrm{diag}(\{\lambda_i^\textrm{M}\})\, .
\end{equation}
The $\epsilon$-dependance of $\lambda_i^\textrm{M}(\epsilon)$ is not trivial, however it can
be expanded in a power series in $\epsilon$,
which converges and has controlled accuracy at small enough $\epsilon$~\cite{Stewart}.
The transformation
\begin{equation}
\bigg[\begin{IEEEeqnarraybox*}[][c]{,c/c,}
        {\bm T}_\textrm{M}^\top & 0 \\
        0 & {\bm T}_\textrm{M}^\top 
    \end{IEEEeqnarraybox*}\bigg] 
{\bm A}
\bigg[\begin{IEEEeqnarraybox*}[][c]{,c/c,}
        {\bm T}_\textrm{M} & 0 \\
        0 & {\bm T}_\textrm{M}
    \end{IEEEeqnarraybox*}\bigg]
    =
\bigg[\begin{IEEEeqnarraybox*}[][c]{,c/c,}
        0 & \mathbb{I} \\
        -{\bm \Lambda}_\textrm{M} & -\gamma\mathbb{I}
    \end{IEEEeqnarraybox*}\bigg]
\end{equation}
leads after index reordering to a matrix with a block diagonal structure composed of $2\times2$ blocks of the form
\begin{equation}\label{eq:2x2 block}
\bigg[\begin{IEEEeqnarraybox*}[][c]{,c/c,}
        0 & 1 \\
        -\lambda_i^\textrm{M} & -\gamma
\end{IEEEeqnarraybox*}\bigg]\,.
\end{equation}
Diagonalizing the $2\times2$ blocks in \eqref{eq:2x2 block}, one obtains the eigenvalues of $\bm A$ in~\eqref{eq:Eigenvalues A}.
For $\Gamma_i\neq0$, we use the right and left eigenvectors of \eqref{eq:2x2 block} to obtain 
the full transformation which diagonalizes $\bm A$. 
\end{proof}

Equations \eqref{eq:T_R} and \eqref{eq:T_L} relate the eigenvectors of $\bm A$ to those of ${\bm M}^{-1/2}\mathbb L_{\bm b}(\epsilon){\bm M}^{-1/2}$, following which one can then express $X_{ij}$ in \eqref{eq:Solution of lyapunov equation} in terms of the 
eigenvectors of ${\bm M}^{-1/2}\mathbb L_{\bm b}(\epsilon){\bm M}^{-1/2}$ under the condition that $\Gamma_i\neq0$, $\forall i$.
Proposition~\ref{prop:Trasformation} is therefore valid for all values of $\gamma$ except a set of zero
measure. 

The linearity of the Lyapunov equation \eqref{eq:Lyapunov equation} with respect to both $\bm X$ and $\bm Q^\textrm{M}$ 
implies that for performance measures involving both frequency and voltage angle degrees of freedom, 
the observability Gramian is a linear combination of an observability Gramian for  
a purely angle-dependent and another for a purely frequency-dependent performance measure.
Without loss of generality, we therefore address separately classes of performance measures involving 
frequency degrees of freedom only and those involving angle degrees of freedom only.
We focus on the ${\bm X}^{(2,2)}$ block of the observability Gramian because, 
from \eqref{eq: Definition B} and \eqref{eq:objective function 2} it is the only block of ${\bm X}$ that appears in 
$\mathcal{L}_2$- and squared $\mathcal{H}_2$-norms.
\begin{proposition}[\bf{Observability Gramian for frequency-based performance measures}]\label{prop: Obs Frequency}
 Consider the power system model defined in \eqref{eq:Swing3} with ${\bm L}_{\bm b} \rightarrow {\mathbb L}_{\bm b}(\epsilon) = {\bm L}_{\bm b}+\epsilon\mathbb{I}$.
 Under Assumption \ref{ass:Homogeneous inertia damping ratio}, the $\bm X^{(2,2)}$ block of the observability Gramian associated with the quadratic performance measure defined in 
 \eqref{eq:objective function} with ${\bm Q}^{(1,1)}=0$, and ${\bm Q}^{(2,2)}\neq0$ is given by
\begin{IEEEeqnarray}{lll}\label{eq:X22 frequency}
\nonumber
 X^{(2,2)}_{ij}&=&\sum_{l,q=1}^N(T_\textnormal{M})_{il}(T_\textnormal{M}^\top)_{qj}({\bm T}_\textnormal{M}^\top {\bm M}^{-1/2}{\bm Q}^{(2,2)}{\bm M}^{-1/2}{\bm T}_\textnormal{M})_{lq}\\ 
 &\times&\left[\frac{\gamma(\lambda_l^\textnormal{M}+\lambda_q^\textnormal{M})}{2\gamma^2(\lambda_l^\textnormal{M}+\lambda_q^\textnormal{M})+(\lambda_q^\textnormal{M}-\lambda_l^\textnormal{M})^2}\right]\,,
\end{IEEEeqnarray}
 where $\lambda_l^\textnormal{M}=\lambda_l^\textnormal{M}(\epsilon)$ and $\bm T_\textnormal{M}$ are defined in 
 \eqref{eq:lambdamtm}.
\end{proposition}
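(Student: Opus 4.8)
The plan is to substitute the explicit diagonalizing matrices $\bm T_R$ and $\bm T_L$ of Proposition~\ref{prop:Trasformation} into the closed-form Gramian of Proposition~\ref{prop:Sol Lyapunov eq}, and then isolate the lower-right block. First I would exploit the block structure of $\bm Q^\textrm{M}$: with $\bm Q^{(1,1)}=0$ only the lower-right block $\bm M^{-1/2}\bm Q^{(2,2)}\bm M^{-1/2}=:\tilde{\bm Q}$ survives, so $(\bm T_R^\top \bm Q^\textrm{M}\bm T_R)_{lq}$ only sees the lower halves of the columns of $\bm T_R$. From \eqref{eq:T_R} these lower halves are $\mu_l^+\bm u^{(l)}_\textrm{M}/\sqrt{\Gamma_l}$ on a ``$+$'' column and $\mathrm{i}\mu_{l}^-\bm u^{(l)}_\textrm{M}/\sqrt{\Gamma_{l}}$ on a ``$-$'' column, where $\bm u^{(l)}_\textrm{M}$ is the $l$-th column of $\bm T_\textrm{M}$. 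Likewise, from \eqref{eq:T_L} the entries of $\bm T_L$ in the frequency columns $N+i$ are $(T_\textrm{M})_{il}/\sqrt{\Gamma_l}$ on a ``$+$'' row and $\mathrm{i}(T_\textrm{M})_{il}/\sqrt{\Gamma_l}$ on a ``$-$'' row. Reading off $X^{(2,2)}_{ij}=X_{N+i,\,N+j}$ from \eqref{eq:Solution of lyapunov equation} then reduces everything to the eigenvectors of $\bm M^{-1/2}\mathbb L_{\bm b}(\epsilon)\bm M^{-1/2}$.

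Next I would split the double sum over the $2N$ eigenvalues of $\bm A$ into the four sectors $(+,+)$, $(+,-)$, $(-,+)$, $(-,-)$, each now running over $a,c\in\{1,\dots,N\}$. Collecting the common weight $(T_\textrm{M})_{ia}(T_\textrm{M})_{jc}(\bm T_\textrm{M}^\top\tilde{\bm Q}\bm T_\textrm{M})_{ac}/(\Gamma_a\Gamma_c)$, what remains is the purely spectral coefficient
\begin{equation*}
S_{ac}=-\frac{\mu_a^+\mu_c^+}{\mu_a^++\mu_c^+}-\frac{\mu_a^-\mu_c^-}{\mu_a^-+\mu_c^-}+\frac{\mu_a^+\mu_c^-}{\mu_a^++\mu_c^-}+\frac{\mu_a^-\mu_c^+}{\mu_a^-+\mu_c^+}\,,
\end{equation*}
where the factors of $\mathrm{i}$ from the ``$-$'' eigenvectors combine through $\mathrm{i}^2=-1$ so that each of the four sectors contributes a real coefficient. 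The claim is then equivalent to the identity that $S_{ac}/(\Gamma_a\Gamma_c)$ equals the bracket in \eqref{eq:X22 frequency}.

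To establish that identity I would use only $\mu_j^\pm=\tfrac12(-\gamma\pm\Gamma_j)$ together with $\mu_j^++\mu_j^-=-\gamma$, $\mu_j^+\mu_j^-=\lambda_j^\textrm{M}$ and $\Gamma_j^2=\gamma^2-4\lambda_j^\textrm{M}$ from \eqref{eq:Eigenvalues A}. Pairing the two ``same-sign'' fractions and the two ``opposite-sign'' fractions and clearing denominators, the numerators simplify to $\pm4\gamma(\lambda_a^\textrm{M}+\lambda_c^\textrm{M})$ while the denominators become $2\gamma^2+4(\lambda_a^\textrm{M}+\lambda_c^\textrm{M})\mp2\Gamma_a\Gamma_c$; combining the two pieces over a common denominator yields $S_{ac}=16\gamma(\lambda_a^\textrm{M}+\lambda_c^\textrm{M})\,\Gamma_a\Gamma_c/[\,(2\gamma^2+4(\lambda_a^\textrm{M}+\lambda_c^\textrm{M}))^2-4\Gamma_a^2\Gamma_c^2\,]$. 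The factor $\Gamma_a\Gamma_c$ then cancels against the $1/(\Gamma_a\Gamma_c)$ pulled out earlier, and substituting $\Gamma_j^2=\gamma^2-4\lambda_j^\textrm{M}$ collapses the denominator to $32\gamma^2(\lambda_a^\textrm{M}+\lambda_c^\textrm{M})+16(\lambda_a^\textrm{M}-\lambda_c^\textrm{M})^2$, reproducing the bracket after cancelling the common factor $16$. A quick sanity check is the diagonal case $a=c$, for which both sides reduce to $1/(2\gamma)$.

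The main obstacle is the bookkeeping in this last algebraic reduction: one must track the $\mathrm{i}$'s and signs coming from the ``$-$'' sectors to be sure that the four complex-looking contributions assemble into a single real rational function, and one must verify that all odd powers of the (possibly imaginary) $\Gamma_j$ cancel, leaving an expression depending only on $\lambda_a^\textrm{M}$, $\lambda_c^\textrm{M}$ and $\gamma$. Because the whole computation is carried out at finite $\epsilon$, where Proposition~\ref{prop:Sol Lyapunov eq} applies and every $\mu_l+\mu_q$ is nonzero, no limit need be taken here and the result holds verbatim with $\lambda_j^\textrm{M}=\lambda_j^\textrm{M}(\epsilon)$.
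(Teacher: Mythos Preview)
Your proposal is correct and follows essentially the same approach as the paper: insert $\bm T_R$, $\bm T_L$ from Proposition~\ref{prop:Trasformation} into \eqref{eq:Solution of lyapunov equation}, restrict to the frequency block, split the $2N$-sum into the four $(\pm,\pm)$ sectors, and reduce the resulting spectral coefficient. The paper arrives at exactly your $S_{ac}/(\Gamma_a\Gamma_c)$ and then simply states that it ``simplifies to \eqref{eq:X22 frequency} using \eqref{eq:Eigenvalues A} and a little bit of algebra''; your write-up actually carries out that algebra (pairing, common denominator, and substitution of $\Gamma_j^2=\gamma^2-4\lambda_j^{\textnormal{M}}$), so it is a strictly more detailed version of the same proof.
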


\begin{proof} 
Inserting \eqref{eq:T_R} and \eqref{eq:T_L} into \eqref{eq:Solution of lyapunov equation}, 
under the assumption that ${\bm Q}^{(1,1)}=0$ and taking the indices $i\,,j\in\{N+1,\ldots,2N\}$ 
to access ${\bm X}^{(2,2)}$ yields
\begin{IEEEeqnarray}{lll}
\nonumber
 X^{(2,2)}_{ij}&=&\sum_{\substack{l,q=1 \\ n,p=1}}^N
 \frac{(T^\top_\textrm{M})_{li}(T^\top_\textrm{M})_{qj}(T_\textrm{M})_{nl}(T_\textrm{M})_{pq}Q^{(2,2)}_{np}}{\sqrt{m_n m_p}~\Gamma_l\Gamma_q} \\ \nonumber
 &\times& \left[\frac{\mu_q^+\mu_l^-}{\mu_l^- + \mu_q^+}+\frac{\mu_q^-\mu_l^+}{\mu_l^+ + \mu_q^-}-\frac{\mu_q^+\mu_l^+}{\mu_l^+ + \mu_q^+}-\frac{\mu_q^-\mu_l^-}{\mu_l^- + \mu_q^-}\right]\,,
\end{IEEEeqnarray}
which simplifies to \eqref{eq:X22 frequency} using \eqref{eq:Eigenvalues A} and a little bit of algebra.
\end{proof}

\begin{proposition}[\bf{Observability Gramian for angle-based performance measures}]\label{prop: Obs angle}
Consider the power system model defined in \eqref{eq:Swing3} with ${\bm L}_{\bm b} \rightarrow {\mathbb L}_{\bm b}(\epsilon) = {\bm L}_{\bm b}+\epsilon\mathbb{I}$.
Under Assumption \ref{ass:Homogeneous inertia damping ratio}, the $\bm X^{(2,2)}$ block of the observability Gramian associated with the quadratic performance measure defined in 
 \eqref{eq:objective function} with ${\bm Q}^{(1,1)}\neq0$, and ${\bm Q}^{(2,2)}=0$ is given by
\begin{IEEEeqnarray}{lll}\label{eq:X22 angle}
\nonumber
 X^{(2,2)}_{ij}&=&\sum_{l,q=1}^N(T_\textnormal{M})_{il}(T_\textnormal{M}^\top)_{qj}({\bm T}_\textnormal{M}^\top {\bm M}^{-1/2}{\bm Q}^{(1,1)}{\bm M}^{-1/2}{\bm T}_\textnormal{M})_{lq}\\ 
 &\times&\left[\frac{2\gamma}{2\gamma^2(\lambda_l^\textnormal{M}+\lambda_q^\textnormal{M})+(\lambda_q^\textnormal{M}-\lambda_l^\textnormal{M})^2}\right]\,,
\end{IEEEeqnarray}
 where $\lambda_l^\textnormal{M}=\lambda_l^\textnormal{M}(\epsilon)$ and $\bm T_\textnormal{M}$ are defined in 
 \eqref{eq:lambdamtm}.
\end{proposition}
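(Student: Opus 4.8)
The plan is to mirror the proof of Proposition~\ref{prop: Obs Frequency} step for step, the only change being which half of the eigenvectors of $\bm A$ is selected by the weight matrix. As there, I would begin from the closed-form Gramian \eqref{eq:Solution of lyapunov equation} of Proposition~\ref{prop:Sol Lyapunov eq}, whose hypotheses hold here because for $\epsilon>0$ all $\mu_j^\pm$ have strictly negative real part (Proposition~\ref{prop:Trasformation}), and substitute the explicit $\bm T_R$ and $\bm T_L$ from \eqref{eq:T_R} and \eqref{eq:T_L}. The single structural difference from the frequency case lies in $\bm Q^\textrm{M}$: with $\bm Q^{(2,2)}=0$ only the upper-left block $\bm M^{-1/2}\bm Q^{(1,1)}\bm M^{-1/2}$ survives, so the factor $(\bm T_R^\top\bm Q^\textrm{M}\bm T_R)_{lq}$ now contracts against the top (``angle'') halves of the columns of $\bm T_R$, carrying the entries $1/\sqrt{\Gamma_j}$ and $\text{i}/\sqrt{\Gamma_j}$, instead of the bottom (``frequency'') halves carrying $\mu_j^+/\sqrt{\Gamma_j}$ and $\text{i}\mu_j^-/\sqrt{\Gamma_j}$. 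To reach the $\bm X^{(2,2)}$ block I would again take the output indices $i,j\in\{N+1,\dots,2N\}$, so that $(T_L)_{li}$ and $(T_L)_{qj}$ select the right halves of the rows of $\bm T_L$, exactly as in Proposition~\ref{prop: Obs Frequency}.

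I would then split the double sum over the $2N$ modes of $\bm A$ into the four branch combinations $(\sigma,\tau)\in\{+,-\}^2$ of the indices $l$ and $q$. Each mode index appears once through a $\bm T_L$ factor and once through a $\bm T_R$ column; hence an index on the $-$ branch contributes $\text{i}^2=-1$ whereas one on the $+$ branch contributes $1$, giving an overall sign $+1$ on the diagonal branches $(+,+),(-,-)$ and $-1$ on the mixed branches $(+,-),(-,+)$. Combined with the weight $-1/(\mu_l+\mu_q)$ of \eqref{eq:Solution of lyapunov equation}, the four contributions assemble into the bracket $-\tfrac{1}{\mu_l^++\mu_q^+}+\tfrac{1}{\mu_l^++\mu_q^-}+\tfrac{1}{\mu_l^-+\mu_q^+}-\tfrac{1}{\mu_l^-+\mu_q^-}$, the angle analogue of the frequency bracket appearing in the proof of Proposition~\ref{prop: Obs Frequency} but now with no $\mu_j^\pm$ factors in the numerators. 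This is the one place where the argument genuinely departs from the frequency case, and where I would check most carefully that the factors of $\text{i}$ recombine to leave a real, symmetric result.

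The remaining work is purely algebraic. Using \eqref{eq:Eigenvalues A} I would write $\mu_j^\pm=(-\gamma\pm\Gamma_j)/2$ and group the four reciprocals as $\big(\tfrac{1}{\mu_l^++\mu_q^-}+\tfrac{1}{\mu_l^-+\mu_q^+}\big)-\big(\tfrac{1}{\mu_l^++\mu_q^+}+\tfrac{1}{\mu_l^-+\mu_q^-}\big)$. Each pair combines to a multiple of $\gamma$ over the quadratic denominators $4\gamma^2-(\Gamma_l-\Gamma_q)^2$ and $4\gamma^2-(\Gamma_l+\Gamma_q)^2$ respectively, and subtracting the two fractions leaves a single numerator proportional to $\gamma\,\Gamma_l\Gamma_q$. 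The overall prefactor $1/(\Gamma_l\Gamma_q)$ arising from the four $1/\sqrt{\Gamma}$ normalizations then cancels this $\Gamma_l\Gamma_q$, and substituting $\Gamma_j^2=\gamma^2-4\lambda_j^\textrm{M}$ converts the denominator into $2\gamma^2(\lambda_l^\textrm{M}+\lambda_q^\textrm{M})+(\lambda_q^\textrm{M}-\lambda_l^\textrm{M})^2$ and the numerator into $2\gamma$, reproducing the bracket of \eqref{eq:X22 angle}. Repackaging the surviving $\bm T_\textrm{M}$ factors as $(\bm T_\textrm{M}^\top\bm M^{-1/2}\bm Q^{(1,1)}\bm M^{-1/2}\bm T_\textrm{M})_{lq}$ then gives the claimed formula. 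I expect the only real difficulty to be this partial-fraction bookkeeping together with the tracking of the factors $\text{i}$ in the previous step; there is no new conceptual ingredient beyond Proposition~\ref{prop: Obs Frequency}.
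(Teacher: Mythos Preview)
Your proposal is correct and follows essentially the same approach as the paper: insert $\bm T_R$, $\bm T_L$ from \eqref{eq:T_R}--\eqref{eq:T_L} into \eqref{eq:Solution of lyapunov equation} with $\bm Q^{(2,2)}=0$, select indices $i,j\in\{N+1,\dots,2N\}$, obtain the four-term bracket $\tfrac{1}{\mu_l^-+\mu_q^+}+\tfrac{1}{\mu_l^++\mu_q^-}-\tfrac{1}{\mu_l^++\mu_q^+}-\tfrac{1}{\mu_l^-+\mu_q^-}$ divided by $\Gamma_l\Gamma_q$, and simplify via \eqref{eq:Eigenvalues A}. Your write-up in fact spells out the partial-fraction reduction more explicitly than the paper, which dismisses it as ``a little bit of algebra.''
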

\begin{proof}
Inserting \eqref{eq:T_R} and \eqref{eq:T_L} into \eqref{eq:Solution of lyapunov equation},
under the assumption that ${\bm Q}^{(2,2)}=0$ and taking the indices $i\,,j\in\{N+1,\ldots,2N\}$ 
to access ${\bm X}^{(2,2)}$ yields
\begin{IEEEeqnarray}{lll}
\nonumber
X^{(2,2)}_{ij}&=&\sum_{\substack{l,q=1\\n,p=1}}^N\frac{(T_\textrm{M}^\top)_{li}(T_\textrm{M}^\top)_{qj}(T_\textrm{M})_{nl}(T_\textrm{M})_{pq}Q^{(1,1)}_{np}}{\sqrt{m_n m_p}~\Gamma_l\Gamma_q} \\ \nonumber
 &\times& \left[\frac{1}{\mu_l^- + \mu_q^+}+\frac{1}{\mu_l^+ + \mu_q^-}-\frac{1}{\mu_l^+ + \mu_q^+}-\frac{1}{\mu_l^- + \mu_q^-}\right]\,,
\end{IEEEeqnarray}
which simplifies to \eqref{eq:X22 angle} using \eqref{eq:Eigenvalues A} and a little bit of algebra.
\end{proof}

\begin{remark}\label{Remark on expressions for the obs gramian}
According to Proposition \ref{prop:Marginal mode assumption} and \ref{prop:Trasformation}, 
$\lambda_i^\textnormal{M}\equiv\lambda_i^\textnormal{M}(\epsilon)$,
and $\lambda_1^\textnormal{M}(0)=0$. Therefore, potentially diverging terms 
in the limit $\epsilon\rightarrow0$ in \eqref{eq:X22 frequency} and \eqref{eq:X22 angle} are those with $l=q=1$,
because the corresponding factor inside the square bracket has a vanishing denominator.
For the frequency case, \eqref{eq:X22 frequency} is well behaved when $\epsilon\rightarrow0$,
because for $l=q$, the square bracket in \eqref{eq:X22 frequency} goes to $1/2 \gamma$ regardless of $\epsilon$. 
For the angle performance measure \eqref{eq:X22 angle} the square bracket for $l=q=1$ has a pole of order one, however. 
This is so, because generally $[\bm u^{(1)},0]^\top$ is not in the kernel of angle performance measure $\bm{Q}^{(1,1)} \ne 0$, $\bm{Q}^{(2,2)}=0$. 

It is interesting to explore how divergences occur and we do it here for the case of uniform inertia, 
$m_i=m~\forall i$, in which case ${\bm T}_\textrm{M}\equiv{\bm T}$ (see Section~\ref{sec:Notations}) 
and $\lambda_i^\textrm{M}=(\lambda_i+\epsilon)/m$, and
for a single-node power injection pulse at node $s$, ${\bm p}(t)=\delta(t){\bm e}_s$ 
and $\bm B=[0,m^{-1/2}{\bm e}_s]^\top$.
The observability Gramian \eqref{eq:X22 angle} for $\bm{Q}^{(1,1)} = \mathbb I$ becomes
\begin{IEEEeqnarray}{c}\label{eq:Gramian coherence}
\nonumber 
X^{(2,2)}_{ij}=\frac{1}{2\gamma}\sum_{l=1}^N(\lambda_l+\epsilon)^{-1}T_{il}(T^\top)_{lj}\\
\label{eq:x22} \Rightarrow {\bm X}^{(2,2)}=\frac{1}{2\gamma}{\mathbb L}^{-1}_{\bm b}(\epsilon)\,.\qquad
\end{IEEEeqnarray}
For this single-node pulse, we therefore obtain
\begin{equation}\label{eq:spulse}
 \mathcal{P}^{(s)} = \frac{1}{2d}\bigg[{\epsilon}^{-1}{u_s^{(1)}}^2+\sum_{l\geq2}^N(\lambda_l+\epsilon)^{-1}{u_s^{(l)}}^2\bigg]\,,
\end{equation}
which diverges when $\epsilon \rightarrow 0$ because of the first term in the square bracket, which depends 
on the marginally stable mode of the Laplacian. 
If instead one considers two simultaneous pulses of opposite magnitude at nodes $s$ and $s'$, then 
${\bm p}(t)=\delta(t)({\bm e}_s-{\bm e}_{s'})$ and one obtains instead of \eqref{eq:spulse}
\begin{IEEEeqnarray}{lll}\label{eq:Pst}
\mathcal{P}^{(s-s')} = 
\mathcal{P}^{(s)}-\mathcal{P}^{(s')} &= &\frac{1}{2d}
\sum_{l\geq2}^N(\lambda_l+\epsilon)^{-1}({u_s^{(l)}}^2-{u_{s'}^{(l)}}^2) \, ,\qquad
\end{IEEEeqnarray}
because the zero mode has constant components, $u_{s}^{(1)}=u_{s'}^{(1)}$.
The $l=1$-term has disappeared, even though the marginally stable mode of $\bm L_{\bm b}$ is not in the 
kernel of $\bm{Q}^{(1,1)}=\mathbb{I}$, because $m^{-1/2}({\bm e}_s-{\bm e}_{s'})$ is orthogonal to
$\bm u^{(1)}$. Accordingly, there is no divergence in the 
performance measure ${\cal P} = {\bm B}^\top \bm X {\bm B}$ as $\epsilon \rightarrow 0$ in this case. 

Further using $u_{i}^{(1)}=1/\sqrt{N}~\forall i$ and the definition of the resistance distance given at the end of 
Section~\ref{sec:Notations}, \eqref{eq:Pst} finally gives
\begin{IEEEeqnarray}{lll}\label{eq:Difference 2 norms}
\mathcal{P}^{(s-s')}  &=& \frac{1}{2dN}\bigg[ \sum_i \Omega_{s i} -\sum_i \Omega_{s' i}\bigg]\,,
\end{IEEEeqnarray}
The quantity $N^{-1}\sum_i\Omega_{si}$ is the average resistance distance separating
node $s$ from the rest of the network. Its inverse is known as the closeness centrality \cite{Bozzo13}.
One concludes that the response $\mathcal{P}^{(s-s')}$ 
to two simultaneous pulses of opposite magnitude is positive (negative) 
if the centrality of node $s'$ is greater (smaller) than that of node $s$ -- 
power fluctuations occurring at nodes which are more central have a smaller impact on transient voltage angle 
fluctuations. 

\end{remark}

It is unclear to us how to derive \eqref{eq:Difference 2 norms} without using the
regularization scheme described above. This result enables to rank nodes according to the network response to 
a local perturbations on each of them. 
The result is finite because the divergence in  $\mathcal{P}^{(s)}$ is the same for all nodes $s$, as 
it originates from homogeneous rotations of all voltage angles, $\varphi_i \rightarrow \varphi_i + \Delta_0$, $\forall i$.
We note that similar expressions as \eqref{eq:X22 frequency} and \eqref{eq:X22 angle}
were recently derived in \cite{paganini2017global} working in the Laplace frequency domain.

\section{Performance measures under line contingencies: Formalism}\label{sec:Line contingencies, formalism}

The results presented so far rely on the assumption that all nodes of the network are governed by the swing
dynamics of \eqref{eq:Swing}, i.e. all nodes are active, thus have synchronous machines with inertia.
In real power networks however, some so-called \textit{passive} nodes are static, i.e. they have no dynamics.
Kron reduction eliminates passive nodes and formulates the dynamics of the network in terms of a 
swing equation similar to \eqref{eq:Swing}, involving only the voltage angles of the synchronous machines, 
all with a finite inertia, on an effective, reduced network~\cite{Dorfler13Kron}.
In Secs. \ref{sec:Line contingencies, formalism} and \ref{sec:Numerics} we therefore distinguish between Laplacians of the 
physical and
Kron reduced networks denoted $\bm L^\textrm{ph}_{\bm b}$ and $\bm L_\textrm{red}$ respectively.
Inertialess nodes are not always passive, however, and usually have a dynamics of their own~\cite{Berg81}. 
In our approach, we Kron-reduce all the inertialess nodes, erasing in particular the corresponding voltage angle dynamics.
In Section~\ref{sec:Numerics}  we therefore comment on the validity
of our approach and mention numerical results obtained on the full, non-reduced network which 
corroborate our Kron-reduction approach.

Let $\mathcal{N}_g=\{1,\ldots,g\}$ and $\mathcal{N}_c=\{g+1,\ldots,N\}$ be the node subsets representing synchronous machines
and passive nodes respectively, we rewrite the DC power flow equation in the physical network as
\begin{equation}\label{eq:DC PF}
  \bigg[\begin{IEEEeqnarraybox*}[][c]{,c,}
        {\bm P_g} \\
        {\bm P_c} 
      \end{IEEEeqnarraybox*}\bigg]= 
      \bm L_{\bm b}^\textrm{ph}
      \bigg[\begin{IEEEeqnarraybox*}[][c]{,c,}
      \bm \theta_g^\star  \\
      \bm \theta_c^\star 
      \end{IEEEeqnarraybox*}\bigg]=      
      \bigg[\begin{IEEEeqnarraybox*}[][c]{,c/c,}
        \bm L^{gg}_{\bm b} & \bm L^{gc}_{\bm b} \\
        \bm L^{cg}_{\bm b} & \bm L^{cc}_{\bm b}
      \end{IEEEeqnarraybox*}\bigg]
      \bigg[\begin{IEEEeqnarraybox*}[][c]{,c,}
      \bm \theta_g^\star  \\
      \bm \theta_c^\star 
      \end{IEEEeqnarraybox*}\bigg]
      \, .
\end{equation}
Applying Kron reduction to \eqref{eq:DC PF} to eliminate $\bm \theta_c^\star$ yields
\begin{equation}\label{eq:kron reduced DC PF}
  \underbrace{{\bm P}_g - {\bm L}_{\bm b}^{gc}{{\bm L}_{\bm b}^{cc}}^{-1}{\bm P}_c}_{\bm P_\textrm{red}} = 
 \underbrace{\left[ {{\bm L}_{\bm b}^{gg}} - {\bm L}_{\bm b}^{gc}{{\bm L}_{\bm b}^{cc}}^{-1}{{\bm L}_{\bm b}^{cg}} \right]}_{\bm L_\textrm{red}}{\bm \theta}_g^\star \, .
\end{equation}
 Equation \eqref{eq:kron reduced DC PF}
 defines an effective vector of power injections ${\bm P_\textrm{red}}$ and an effective $g \times g$ 
 Laplacian ${\bm L_\textrm{red}}$ on a reduced graph with $g$ nodes \cite{Dorfler13Kron}.
The swing dynamics on the reduced graph reads
\begin{equation}\label{eq:Swing kron}
{\bm M}\ddot{{\bm \theta}}_g = -{\bm D}\dot{{\bm \theta}}_g + {\bm P}_\textrm{red} - {{\bm L}}_\textrm{red}{\bm \theta}_g,
\end{equation}
with ${\bm M}=\textrm{diag}(\{m_i\})$ and $\bm D = \textrm{diag}(\{d_i\})$ for $i\in\mathcal{N}_g$. After Kron reduction, all remaining nodes have inertia. 

Next we show how the observability Gramian formalism outlined earlier can be applied to more general perturbations than the power injection
fluctuations discussed so far. We consider nonsingular single line faults that do not 
split the physical network into  disconnected parts, and introduce a time dependent network Laplacian
\begin{equation}\label{eq: Laplacian N-1}
 {{\bm L}^\textrm{ph}_{\bm b}}(t)={{\bm L}^\textrm{ph}_{\bm b}}-\delta(t) \, b_{\alpha\beta} \, \tau \, {\bm e}^{\phantom{\top}}_{(\alpha,\beta)}{{\bm e}^\top_{(\alpha,\beta)}}\,,
\end{equation}
where ${\bm e}_{(\alpha,\beta)}\in\mathbb{R}^{|\mathcal{N}_g|+|\mathcal{N}_c|}$ and $\tau$ is introduced for 
dimensionality purposes. This describes an infinitesimally short fault at $t=0$, 
during which the $\alpha-\beta$ line breaks. Accordingly, the time-dependent Laplacian \eqref{eq: Laplacian N-1}
 corresponds to a network without the $\alpha-\beta$ edge at $t=0$.
In what follows we consider power networks and operating conditions such that line contingencies do not
drive the transient dynamics outside the basin of attraction of the nominal operating point.
This assumption is realistic when treating short line faults as we do here.
Furthermore, we assume that the linearized swing dynamics still holds under such contingency.
Our numerical investigations to be presented below show that these assumptions are 
justified as long as the fault duration is not too large.

To characterize the transient resulting from a line contingency, we first need to formulate 
how a line fault in the physical network \eqref{eq: Laplacian N-1}, impacts the swing dynamics 
of the Kron reduced network \eqref{eq:Swing kron}.
This requires to distinguish three cases:

\subsubsection{The faulted line connects two synchronous machines}
In this case $\alpha,\beta\in\mathcal{N}_g$ and the fault \eqref{eq: Laplacian N-1} only affects the ${\bm L}_{\bm b}^{gg}$ block 
of the Laplacian ${\bm L}^\textrm{ph}_{\bm b}$.
In terms of ${\bm L}_\textrm{red}$ and ${\bm P}_\textrm{red}$ the fault is described by
\begin{equation}\label{eq:Pred Lred case 1}
  {\bm P}_\textrm{red}\rightarrow \bm P_\textrm{red}\,, \quad \quad
  \bm L_\textrm{red} \rightarrow {\bm L}_\textrm{red} - \delta(t)b_{\alpha\beta} \, \tau \, {\bm e}^{\phantom{\top}}_{(\alpha,\beta)}{{\bm e}^\top_{(\alpha,\beta)}}\,,
\end{equation}
where ${\bm e}_{(\alpha,\beta)}\in\mathbb{R}^{|\mathcal{N}_g|}$.
The swing equation~\eqref{eq:Swing kron}, relative to the nominal operating point $\bm\theta_g(t)=\bm\theta_g^\star+{\bm \varphi}_g(t)$, becomes
\begin{equation}\label{eq:Swing N-1}
 {\bm M}{\ddot{\bm \varphi}}_g = -{\bm D}{\dot{\bm\varphi}}_g - {{\bm L}_\textrm{red}}{\bm \varphi}_g +
 \delta(t) b_{\alpha\beta}\, \tau \, {\bm e}^{\phantom{\top}}_{(\alpha,\beta)}{{\bm e}^\top_{(\alpha,\beta)}}({\bm \theta}_g^\star + \bm \varphi_g)\,.\qquad
 \end{equation}
With the initial condition $({\bm \varphi}_g(0),\bm \omega_g(0))=(0,0)$, the solution to \eqref{eq:Swing N-1} is 
 \begin{equation}\label{eq:Sol N-1 case1}
\begin{IEEEeqnarraybox*}[][c]{c,c,c}   
  \bigg[\begin{IEEEeqnarraybox*}[][c]{,c,}
         \overline{\bm \varphi}_g(t) \\
         \overline{\bm \omega}_g(t) 
       \end{IEEEeqnarraybox*}\bigg]&=&
       e^{{\bm A} t}
       \underbrace{\bigg[\begin{IEEEeqnarraybox*}[][c]{,c,}
         0 \\
         {\bm M}^{-1/2}b_{\alpha\beta} \, \tau \, {\bm e}^{\phantom{\top}}_{(\alpha,\beta)}{{\bm e}^\top_{(\alpha,\beta)}}{\bm \theta}_g^\star
       \end{IEEEeqnarraybox*}\bigg]}_{\bm B}\,,
      \end{IEEEeqnarraybox*}
 \end{equation}
 where $\overline{{\bm \varphi}}_g={\bm M}^{1/2}{\bm \varphi}_g$, 
 $\overline{{\bm \omega}}_g={\bm M}^{1/2}{\bm \omega}_g$, and $\bm A$ is the matrix defined in \eqref{eq:Swing3} with 
 ${\bm L}_{\bm b}$ replaced by ${\bm L}_\textrm{red}$.

\subsubsection{The faulted line connects two passive nodes}
In this case $\alpha, \beta \in\mathcal{N}_c$ and the fault only affects the ${\bm L}_{\bm b}^{cc}$ block of the Laplacian ${\bm L}^\textrm{ph}_{\bm b}$
while the blocks ${\bm L}_{\bm b}^{gg}$, ${\bm L}_{\bm b}^{gc}$ and ${\bm L}_{\bm b}^{cg}$ remain unchanged.
This impacts both ${\bm L}_\textrm{red}$ and ${\bm P}_\textrm{red}$ which become
\begin{IEEEeqnarray}{l}\label{eq:Pred Lred case 2}
  \nonumber
  {\bm P}_\textrm{red} - \delta(t)b_{\alpha\beta} \, \tau \, 
  \frac{{\bm L}_{\bm b}^{gc}[{\bm L}_{\bm b}^{cc}]^{-1}{\bm e}^{\phantom{\top}}_{(\alpha,\beta)}{{\bm e}^\top_{(\alpha,\beta)}}[{\bm L}_{\bm b}^{cc}]^{-1}{\bm P}_c}{1-b_{\alpha\beta}{{\bm e}^\top_{(\alpha,\beta)}} [{\bm L}_{\bm b}^{cc}]^{-1}{\bm e}^{\phantom{\top}}_{(\alpha,\beta)}}\,,\\
  \nonumber
  {\bm L}_\textrm{red} - \delta(t)b_{\alpha\beta} \, \tau \, 
  \frac{{\bm L}_{\bm b}^{gc}[{\bm L}_{\bm b}^{cc}]^{-1}{\bm e}^{\phantom{\top}}_{(\alpha,\beta)}{{\bm e}^\top_{(\alpha,\beta)}}[{\bm L}_{\bm b}^{cc}]^{-1}{\bm L}_{\bm b}^{cg}}{1-b_{\alpha\beta}{{\bm e}^\top_{(\alpha,\beta)}} [{\bm L}_{\bm b}^{cc}]^{-1}{\bm e}^{\phantom{\top}}_{(\alpha,\beta)}}\,,\\  
\end{IEEEeqnarray}
where ${\bm e}_{(\alpha,\beta)}\in\mathbb{R}^{|\mathcal{N}_c|}$, and where we used the Sherman-Morrison formula \cite{Sherman50}
\begin{IEEEeqnarray}{ll}
\nonumber
   [{\bm L}_{\bm b}^{cc}&-b_{\alpha\beta}{\bm e}^{\phantom{\top}}_{(\alpha,\beta)}{{\bm e}^\top_{(\alpha,\beta)}}]^{-1}=
 { [{\bm L}_{\bm b}^{cc}]^{-1}} \\
 & +b_{\alpha\beta}\frac{[{\bm L}_{\bm b}^{cc}]^{-1}{\bm e}^{\phantom{\top}}_{(\alpha,\beta)}{{\bm e}^\top_{(\alpha,\beta)}}[{\bm L}_{\bm b}^{cc}]^{-1}}{1-b_{\alpha\beta}{{\bm e}^\top_{(\alpha,\beta)}} [{\bm L}_{\bm b}^{cc}]^{-1}{\bm e}^{\phantom{\top}}_{(\alpha,\beta)}}\,,
\end{IEEEeqnarray}
to express the inverse of the rank-1 perturbation of ${{\bm L}_{\bm b}^{cc}}$.
Injecting \eqref{eq:Pred Lred case 2} in \eqref{eq:Swing kron}, and solving the swing equation 
with initial conditions $({\bm \varphi}_g(0),\bm \omega_g(0))=(0,0)$ yields
\begin{equation}\label{eq:Sol N-1 case2}
\begin{IEEEeqnarraybox*}[][c]{c,c,c}
  \bigg[\begin{IEEEeqnarraybox*}[][c]{,c,}
         \overline{\bm \varphi}_g(t) \\
         \overline{\bm \omega}_g(t) 
       \end{IEEEeqnarraybox*}\bigg] &=&
  e^{{\bm A}t} \underbrace{\left[
  \begin{IEEEeqnarraybox*}[\footnotesize][c]{,c,}
   0 \\[2mm]
   \frac{-b_{\alpha\beta} \, \tau \, {\bm M}^{-1/2}{{\bm L}_{\bm b}^{gc}}[{\bm L}_{\bm b}^{cc}]^{-1}{\bm e}^{\phantom{\top}}_{(\alpha,\beta)}{{\bm e}^\top_{(\alpha,\beta)}} {\bm \theta}_c^\star}{1-b_{\alpha\beta}{{\bm e}^\top_{(\alpha,\beta)}} [{\bm L}_{\bm b}^{cc}]^{-1}{\bm e}^{\phantom{\top}}_{(\alpha,\beta)}}\\
  \end{IEEEeqnarraybox*}
 \right]}_{\bm B}\,,
\end{IEEEeqnarraybox*}
\end{equation}
where $\overline{{\bm \varphi}}_g={\bm M}^{1/2}{\bm \varphi}_g$, 
$\overline{{\bm \omega}}_g={\bm M}^{1/2}{\bm \omega}_g$, and $\bm A$ is the matrix defined in \eqref{eq:Swing3} with ${\bm L}_{\bm b}$
replaced by ${\bm L}_\textrm{red}$.

\subsubsection{The faulted line connects a synchronous machine and a passive node}
In this case $\alpha\in\mathcal{N}_g$ and $\beta\in\mathcal{N}_c$, and the four blocks of ${\bm L}^\textrm{ph}_{\bm b}$
change according to
\begin{IEEEeqnarray}{lll}
\nonumber
  {\bm L}_{\bm b}^{gg}&\rightarrow&{\bm L}_{\bm b}^{gg}-b_{\alpha\beta}{\bm e}_{\alpha}{\bm e}_{\alpha}^\top \,,\\ \nonumber
  {\bm L}_{\bm b}^{cc}&\rightarrow&{\bm L}_{\bm b}^{cc}-b_{\alpha\beta}{\bm e}_{\beta} {\bm e}_{\beta}^\top  \,,\\ \nonumber
  {\bm L}_{\bm b}^{cg}&\rightarrow&{\bm L}_{\bm b}^{cg}+b_{\alpha\beta}{\bm e}_{\beta} {\bm e}_{\alpha}^\top \,,\\
  {\bm L}_{\bm b}^{gc}&\rightarrow&{\bm L}_{\bm b}^{gc}+b_{\alpha\beta}{\bm e}_{\alpha}{\bm e}_{\beta}^\top \,,
\end{IEEEeqnarray}
where ${\bm e}_{\alpha}\in\mathbb{R}^{|\mathcal{N}_g|}$ and ${\bm e}_{\beta}\in\mathbb{R}^{|\mathcal{N}_c|}$.
${\bm P}_\textrm{red}$ and ${\bm L}_\textrm{red}$ become
\begin{IEEEeqnarray}[\small]{l}\label{eq:Pred Lred case 3}
\nonumber
{\bm P}_\textrm{red} - \delta(t)b_{\alpha\beta} \, \tau \, \frac{[{\bm e}_{\alpha}+{\bm L}_{\bm b}^{gc}[{\bm L}_{\bm b}^{cc}]^{-1}{\bm e}_{\beta}]{\bm e}_{\beta}^\top[{\bm L}_{\bm b}^{cc}]^{-1} {\bm P}_c}{1-b_{\alpha\beta}[L_{b}^{cc}]^{-1}_{\beta\beta}}\,, \\[1mm]
\nonumber
{\bm L}_\textrm{red} - \delta(t)b_{\alpha\beta} \, \tau \, \frac{[{\bm e}_{\alpha}+{\bm L}_{\bm b}^{gc}[{\bm L}_{\bm b}^{cc}]^{-1}{\bm e}_{\beta}][{\bm e}_{\alpha}^\top+{\bm e}_{\beta}^\top[{\bm L}_{\bm b}^{cc}]^{-1}{\bm L}_{\bm b}^{cg}]}{1-b_{\alpha\beta}[L_{\bm b}^{cc}]^{-1}_{\beta\beta}}\,, \\
\end{IEEEeqnarray}
where, again, we used the Sherman-Morrison formula \cite{Sherman50} to compute the inverse of 
${\bm L}_{\bm b}^{cc} -b_{\alpha\beta}{\bm e}_{\beta} {\bm e}_{\beta}^\top$.
Finally, injecting \eqref{eq:Pred Lred case 3} into \eqref{eq:Swing kron}, and solving the swing equation 
with initial conditions $({\bm \varphi}_g(0),\bm \omega_g(0))=(0,0)$ yields
\begin{equation}\label{eq:Sol N-1 case3}
\begin{IEEEeqnarraybox*}[][c]{c,c,c}
\bigg[\begin{IEEEeqnarraybox*}[][c]{,c,}
         \overline{\bm \varphi}_g(t) \\
         \overline{\bm \omega}_g(t) 
       \end{IEEEeqnarraybox*}\bigg] &=&
  e^{{\bm A}t} \underbrace{
  \left[\begin{IEEEeqnarraybox*}[\scriptsize][c]{,c,}
   0 \\[1mm]
   \frac{b_{\alpha\beta} \, \tau \, {\bm M}^{-1/2}({\bm e}_{\alpha}+{\bm L}_{\bm b}^{gc}[{\bm L}_{\bm b}^{cc}]^{-1}{\bm e}_{\beta})(\theta_{g,\alpha}^\star-\theta_{c,\beta}^\star)}{1-b_{\alpha\beta}[{\bm L}_{\bm b}^{cc}]^{-1}_{\beta\beta}}\\
   \end{IEEEeqnarraybox*}\right]}_{\bm B}\,, 
\end{IEEEeqnarraybox*}
\end{equation}
 where $\overline{{\bm \varphi}}_g={\bm M}^{1/2}{\bm \varphi}_g$, 
 $\overline{{\bm \omega}}_g={\bm M}^{1/2}{\bm \omega}_g$, and $\bm A$ is the matrix defined in \eqref{eq:Swing3} with ${\bm L}_{\bm b}$
 replaced by ${\bm L}_\textrm{red}$.

Having solved the swing equation for the three types of line faults we are now ready to calculate performance measures
and present our main results.

\begin{proposition}[\bf{angle coherence under line contingency}]\label{prop:angle coherence contingency}
Consider the Kron reduced power system model of \eqref{eq:Swing kron} with
$\bm L_{\rm red}\rightarrow \mathbb L_{\rm red}(\epsilon) = \bm L_{\rm red} +\epsilon \mathbb I$, $\epsilon>0$ and
$m_i=m=d /\gamma ~\forall i\in\mathcal{N}_g$.
The angle coherence measure 
\begin{equation}
\mathcal{P}_{\bm \varphi} = \int_0^\infty {\bm \varphi}_g^\top{\bm \varphi}_g \dd t \, ,
\end{equation} 
evaluated for a line contingency given by \eqref{eq: Laplacian N-1} is 
\begin{equation}\label{eq:L2 coherence gen-gen contingency}
\mathcal{P}_{ \bm \varphi} = \frac{{P^2_{\alpha,\beta}}\, \tau^2}{2d}\Omega_{\alpha\beta}\,,
\end{equation}
if the faulted line connects two synchronous machines, by
\begin{equation}\label{eq:L2 coherence con-con contingency}
\mathcal{P}_{\bm \varphi}=\frac{{P^2_{\alpha,\beta}}\, \tau^2}{2d}
 \frac{\Omega_{\alpha\beta}-{\bm e}^\top_{(\alpha,\beta)}[{\bm L}_{\bm b}^{cc}]^{-1}{\bm e}^{\phantom{\top}}_{(\alpha,\beta)}}{[1-b_{\alpha\beta}{\bm e}^\top_{(\alpha,\beta)}[{\bm L}_{\bm b}^{cc}]^{-1}{\bm e}^{\phantom{\top}}_{(\alpha,\beta)}]^2}\,,
\end{equation}
if the faulted line connects two passive nodes, and by
\begin{equation}\label{eq:L2 coherence gen-con contingency}
 \mathcal{P}_{ \bm \varphi}=\frac{{P^2_{\alpha,\beta}}\, \tau^2}{2d}
 \frac{\Omega_{\alpha\beta}-[{\bm L}_{\bm b}^{cc}]^{-1}_{\beta\beta}}{[1-b_{\alpha\beta}[{\bm L}_{\bm b}^{cc}]^{-1}_{\beta\beta}]^2}\,,
\end{equation} 
if the faulted line connects a synchronous machine $\alpha$ and a passive node $\beta$.
In \eqref{eq:L2 coherence gen-gen contingency}--\eqref{eq:L2 coherence gen-con contingency}, 
$P_{\alpha,\beta}$ is the power flow on the $\alpha-\beta$ 
line prior to the fault, and $\Omega_{\alpha\beta}$ is the resistance distance computed with respect 
to the physical network ${\bm L}^\textrm{ph}_{\bm b}$, prior to Kron reduction.
\end{proposition}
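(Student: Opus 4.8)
\emph{Proof plan.} The plan is to evaluate $\mathcal{P}_{\bm\varphi}=\bm B^\top\bm X\bm B$ from \eqref{eq:objective function 2} for the choice $\bm Q^{(1,1)}=\mathbb I$, $\bm Q^{(2,2)}=0$. Since the forcing vector $\bm B$ has a vanishing upper block in all three cases \eqref{eq:Sol N-1 case1}, \eqref{eq:Sol N-1 case2} and \eqref{eq:Sol N-1 case3}, only the $\bm X^{(2,2)}$ block contributes, so $\mathcal{P}_{\bm\varphi}=\bm b^\top\bm X^{(2,2)}\bm b$ with $\bm b$ the lower block of $\bm B$. First I would specialize Proposition~\ref{prop: Obs angle} to uniform inertia $m_i=m$, for which $\bm T_\textnormal{M}=\bm T$ diagonalizes $\mathbb L_{\rm red}(\epsilon)$ and $\lambda_i^\textnormal{M}=(\lambda_i+\epsilon)/m$. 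Because $\bm M^{-1/2}\bm Q^{(1,1)}\bm M^{-1/2}=m^{-1}\mathbb I$, the matrix $\bm T_\textnormal{M}^\top\bm M^{-1/2}\bm Q^{(1,1)}\bm M^{-1/2}\bm T_\textnormal{M}=m^{-1}\mathbb I$ is diagonal; the double sum in \eqref{eq:X22 angle} then collapses to its $l=q$ terms, for which the bracket equals $m/[2\gamma(\lambda_l+\epsilon)]$. This is exactly the collapse already observed in \eqref{eq:x22}, and it yields $\bm X^{(2,2)}=(2\gamma)^{-1}\mathbb L_{\rm red}^{-1}(\epsilon)$, whence $\mathcal{P}_{\bm\varphi}=(2\gamma)^{-1}\bm b^\top\mathbb L_{\rm red}^{-1}(\epsilon)\bm b$.

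Next I would insert the explicit $\bm b$ for each contingency. In every case $\bm b=c\,\bm M^{-1/2}\bm w=c\,m^{-1/2}\bm w$, where the scalar $c$ collects $\tau$, the pre-fault line flow $P_{\alpha,\beta}=b_{\alpha\beta}(\theta^\star_\alpha-\theta^\star_\beta)$, and (in the two passive cases) the Sherman--Morrison denominator, while the vector $\bm w$ is $\bm e_{(\alpha,\beta)}$ for two machines, $\bm L_{\bm b}^{gc}[\bm L_{\bm b}^{cc}]^{-1}\bm e_{(\alpha,\beta)}$ for two passive nodes, and $\bm e_\alpha+\bm L_{\bm b}^{gc}[\bm L_{\bm b}^{cc}]^{-1}\bm e_\beta$ for the mixed case. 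Using $\gamma=d/m$, the prefactor $(2\gamma)^{-1}m^{-1}=(2d)^{-1}$ produces the common factor $\tau^2P_{\alpha,\beta}^2/(2d)$ multiplied by the squared denominators already displayed in \eqref{eq:L2 coherence con-con contingency} and \eqref{eq:L2 coherence gen-con contingency}, so the whole problem reduces to evaluating $\lim_{\epsilon\to0}\bm w^\top\mathbb L_{\rm red}^{-1}(\epsilon)\bm w$. I would first verify $\bm w\perp\bm u^{(1)}$: the zero column sums of $\bm L_{\bm b}^\textrm{ph}$ give $\mathbf 1_g^\top\bm L_{\bm b}^{gc}=-\mathbf 1_c^\top\bm L_{\bm b}^{cc}$, so $\mathbf 1_g^\top\bm w=0$ in each case. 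Remark~\ref{remark on inverses} then allows replacing $\mathbb L_{\rm red}^{-1}(\epsilon)$ by $\bm L_{\rm red}^\dagger$ in the limit, so that no divergence survives.

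The heart of the argument, which I expect to be the main obstacle, is the identity $\bm w^\top\bm L_{\rm red}^\dagger\bm w=\Omega_{\alpha\beta}-(\textnormal{passive correction})$, with $\Omega_{\alpha\beta}$ the resistance distance of the \emph{physical} network. I would establish it through the current-injection characterization of $\Omega_{\alpha\beta}$: solve $\bm L_{\bm b}^\textrm{ph}\,\bm v=\bm\iota$ for the injection $\bm\iota$ driving unit current between $\alpha$ and $\beta$ -- namely $[\bm e_{(\alpha,\beta)};\,0]$, $[0;\,\bm e_{(\alpha,\beta)}]$, or $[\bm e_\alpha;\,-\bm e_\beta]$ respectively -- so that $\Omega_{\alpha\beta}=\bm\iota^\top\bm v$. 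Eliminating the passive potentials $\bm v_c$ through the second block row with $[\bm L_{\bm b}^{cc}]^{-1}$ collapses the first block row to precisely the Kron-reduced equation $\bm L_{\rm red}\,\bm v_g=\pm\bm w$, whose solution orthogonal to $\mathbf 1$ is $\bm v_g=\pm\bm L_{\rm red}^\dagger\bm w$. Back-substituting $\bm v_c=[\bm L_{\bm b}^{cc}]^{-1}(\bm\iota_c-\bm L_{\bm b}^{cg}\bm v_g)$ and forming $\bm\iota^\top\bm v$ then splits $\Omega_{\alpha\beta}$ into $\bm w^\top\bm L_{\rm red}^\dagger\bm w$ plus the passive block term $\bm e_{(\alpha,\beta)}^\top[\bm L_{\bm b}^{cc}]^{-1}\bm e_{(\alpha,\beta)}$ (two passive nodes) or $[\bm L_{\bm b}^{cc}]^{-1}_{\beta\beta}$ (mixed case), every constant-shift ambiguity cancelling because $\bm w\perp\mathbf 1$. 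For two machines the passive correction is absent, which recovers the familiar invariance of effective resistance under Kron reduction and hence \eqref{eq:L2 coherence gen-gen contingency}; rearranging the remaining two cases for $\bm w^\top\bm L_{\rm red}^\dagger\bm w$ and substituting into the prefactor yields \eqref{eq:L2 coherence con-con contingency} and \eqref{eq:L2 coherence gen-con contingency}.
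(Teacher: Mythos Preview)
Your proposal is correct and in fact cleaner than the paper's own argument, though it takes a genuinely different route at the key step.

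Both you and the paper begin identically: specialize Proposition~\ref{prop: Obs angle} to uniform inertia to obtain $\bm X^{(2,2)}=(2\gamma)^{-1}\mathbb L_{\rm red}^{-1}(\epsilon)$, insert the appropriate $\bm B$, and reduce the task to evaluating a quadratic form in $\mathbb L_{\rm red}^{-1}(\epsilon)$. The divergence in approach comes when relating that quadratic form to the resistance distance $\Omega_{\alpha\beta}$ of the \emph{physical} network. The paper, for the passive--passive case, invokes the block-inversion identity $[(\mathbb L_{\bm b}^{\rm ph}(\epsilon))^{-1}]^{cc}=[\bm L_{\bm b}^{cc}]^{-1}+[\bm L_{\bm b}^{cc}]^{-1}\bm L_{\bm b}^{cg}\mathbb L_{\rm red}^{-1}(\epsilon)\bm L_{\bm b}^{gc}[\bm L_{\bm b}^{cc}]^{-1}$ and then appeals to matrix perturbation theory (first-order expansions of the eigenvalues and eigenvectors of $\mathbb L_{\bm b}^{\rm ph}(\epsilon)$) to show that the contribution of the marginally stable mode to $\bm e_{(\alpha,\beta)}^\top[(\mathbb L_{\bm b}^{\rm ph}(\epsilon))^{-1}]^{cc}\bm e_{(\alpha,\beta)}$ vanishes as $\epsilon\to 0$, thereby identifying the limit with $\Omega_{\alpha\beta}$. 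You instead first check directly that $\bm w\perp\mathbf 1_g$ in all three cases (via the column-sum identity $\mathbf 1_g^\top\bm L_{\bm b}^{gc}=-\mathbf 1_c^\top\bm L_{\bm b}^{cc}$), use Remark~\ref{remark on inverses} to replace $\mathbb L_{\rm red}^{-1}(\epsilon)$ by $\bm L_{\rm red}^\dagger$ at once, and then establish the identity $\bm w^\top\bm L_{\rm red}^\dagger\bm w=\Omega_{\alpha\beta}-(\text{passive correction})$ by block-eliminating the passive potentials in the linear system $\bm L_{\bm b}^{\rm ph}\bm v=\bm\iota$ with unit current injected between $\alpha$ and $\beta$. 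This is an elementary Schur-complement computation that treats all three cases uniformly, requires no perturbation expansion, and makes transparent why the passive corrections $\bm e_{(\alpha,\beta)}^\top[\bm L_{\bm b}^{cc}]^{-1}\bm e_{(\alpha,\beta)}$ and $[\bm L_{\bm b}^{cc}]^{-1}_{\beta\beta}$ arise (they are exactly $\bm\iota_c^\top[\bm L_{\bm b}^{cc}]^{-1}\bm\iota_c$). The paper's approach, by contrast, keeps the $\epsilon$-regularization in play longer, which serves its methodological purpose of illustrating how the regularization removes singularities, but at the cost of a more technical argument.
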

\begin{proof}
The Kron reduced Laplacian $\bm L_{\rm red}$ of \eqref{eq:kron reduced DC PF} has a marginally stable mode which 
we deal with using the regularization scheme described above, i.e. by introducing 
$\mathbb L_{\rm red}(\epsilon) = \bm L_{\rm red} +\epsilon \mathbb I$, with the $g \times g$ identity matrix and $\epsilon >0$.
The observability Gramian associated with the angle coherence measure is obtained 
from \eqref{eq:X22 angle} as $\mathbb L_{\rm red}^{-1}(\epsilon)/2 \gamma$ [see \eqref{eq:Gramian coherence}]. 
Thanks to the regularization, this is well defined for $\epsilon >0$ (see
Propositions \ref{prop:Regularized inverse of L} and \ref{prop:Marginal mode assumption}). 
To compute $\mathcal{P}={\bm B}^\top \bm X {\bm B}$ for the three types of line contingencies, we use $\bm B$ defined in \eqref{eq:Sol N-1 case1},
\eqref{eq:Sol N-1 case2} and \eqref{eq:Sol N-1 case3} respectively.
We will see at the end of the calculation that, for the line fault \eqref{eq: Laplacian N-1}, 
the limit $\epsilon \rightarrow 0$ is also well-behaved.

For a line fault between two synchronous machines, one obtains rather directly
\begin{IEEEeqnarray}{c}\label{eq:calP}
\mathcal{P}_{\bm \varphi}=\frac{b_{\alpha \beta}^2}{2 d} (\theta_{g,\alpha}^\star-\theta_{g,\beta}^\star)^2 \, \tau^2 \,
\left[ \bm e_{(\alpha,\beta)}^\top \mathbb L_{\rm red}^{-1}(\epsilon) \bm e_{(\alpha,\beta)}
\right] \, ,
\end{IEEEeqnarray}
where the symbols $e_{(\alpha,\beta)}$ have been defined in Section \ref{sec:Notations}.
To calculate the expression in the square bracket we introduce the matrix $\bm T_{\rm red}$ diagonalizing 
$\mathbb L_{\rm red}(\epsilon)$, 
i.e. $\bm T^\top_{\rm red} \mathbb L_{\rm red}(\epsilon) \bm T_{\rm red} = {\rm diag}(\{\lambda_i+\epsilon\})$.
Because $\mathbb L_{\rm red}(\epsilon)$ differs from $\bm L_{\rm red}$
only by a multiple of the identity, $\bm T_{\rm red}$ also diagonalizes $\bm L_{\rm red}$ and is therefore independent of $\epsilon$.
One obtains 
\begin{IEEEeqnarray}{ccc}\label{eq:sqbracket}
\bm e_{(\alpha,\beta)}^\top \mathbb L_{\rm red}^{-1}(\epsilon) \bm e_{(\alpha,\beta)} &=& 
\sum_{l=1}^g \frac{[(T_{\rm red})_{\alpha l} - (T_{\rm red})_{\beta l} ]^2}{\lambda_l+\epsilon} \nonumber \\
&=& \sum_{l=2}^g \frac{[(T_{\rm red})_{\alpha l} - (T_{\rm red})_{\beta l} ]^2}{\lambda_l+\epsilon} \, ,
\end{IEEEeqnarray}
where in the last line we omitted the $l=1$ contribution, because the $(T_{\rm red})_{\alpha 1} - (T_{\rm red})_{\beta 1}=0$
since the marginally stable mode has constant components. We can then take the limit $\epsilon \rightarrow 0$ without
problem. Since $ \bm T_{\rm red}$ is independent of $\epsilon$ (see above), 
inserting \eqref{eq:sqbracket} into \eqref{eq:calP} and noting that $P_{\alpha,\beta} = 
b_{\alpha \beta} (\theta_{g,\alpha}^\star-\theta_{g,\beta}^\star)$ 
finally gives \eqref{eq:L2 coherence gen-gen contingency}.

The calculation is more intricate for a line fault between two passive nodes in the physical network. 
Using \eqref{eq:Pred Lred case 2} and \eqref{eq:Sol N-1 case2} one obtains, after some algebra
\begin{IEEEeqnarray}{ccc}\label{eq:calPnn}
\mathcal{P}_{\bm \varphi}&=&\frac{b_{\alpha \beta}^2 (\theta_{c,\alpha}^\star-\theta_{c,\beta}^\star)^2  \, \tau^2 }{2 d} \,  \frac{1}{(1-
b_{\alpha \beta} \bm e_{(\alpha,\beta)}^\top [\bm L_{\bm b}^{\rm cc}]^{-1} \bm e_{(\alpha,\beta)})^2} \nonumber  \\
&\times &
\left[ \bm e_{(\alpha,\beta)}^\top (\bm L_{\bm b}^{\rm cc})^{-1} \bm L_{\bm b}^{\rm cg} \mathbb L_{\rm red}^{-1}(\epsilon) 
\bm L_{\bm b}^{\rm gc} [\bm L_{\bm b}^{\rm cc}]^{-1} \bm e_{(\alpha,\beta)}
\right] \, , \quad
\end{IEEEeqnarray}
which depends on blocks of the physical Laplacian as well as inverses thereof and of $\mathbb L_{\rm red}(\epsilon)$. The regularization  
$\bm L_{\rm red} \rightarrow  \mathbb L_{\rm red}(\epsilon) = \bm L_{\rm red} +\epsilon \mathbb I$ is defined on the Kron reduced Laplacian.
For the physical Laplacian, it reads
\begin{equation}\label{eq:DC PF2}
  \bm L^\textrm{ph}_{\bm b} \rightarrow \mathbb L^\textrm{ph}_{\bm b}(\epsilon) =  
  \bigg[\begin{IEEEeqnarraybox*}[][c]{,c/c,}
        \bm L^{gg}_{\bm b} & \bm L^{gc}_{\bm b} \\
        \bm L^{cg}_{\bm b} & \bm L^{cc}_{\bm b}
      \end{IEEEeqnarraybox*}\bigg]
      +  \bigg[\begin{IEEEeqnarraybox*}[][c]{,c/c,}
       \epsilon \mathbb I &0 \\
        0& 0
      \end{IEEEeqnarraybox*}\bigg]
      \, .
\end{equation}
Using matrix block inversion \cite{Horn2012matrix} and \eqref{eq:kron reduced DC PF}
one obtains
\begin{IEEEeqnarray}{c}\label{eq:Block inversion property}
\nonumber
 [(\mathbb L^\textrm{ph}_{\bm b}(\epsilon))^{-1}]^{cc} = [\bm L_{\bm b}^{cc}]^{-1} + 
 [\bm L_{\bm b}^{cc}]^{-1}\bm L_{\bm b}^{cg} ({\mathbb L}_\textrm{red}(\epsilon))^{-1} \bm L_{\bm b}^{gc}[\bm L_{\bm b}^{cc}]^{-1}\, 
\end{IEEEeqnarray}
with which we rewrite \eqref{eq:calPnn} as
\begin{IEEEeqnarray}{ccc}\label{eq:PPP}
\mathcal{P}_{\bm \varphi}&=&\frac{b_{\alpha \beta}^2 (\theta_{c,\alpha}^\star-\theta_{c,\beta}^\star)^2  \, \tau^2 }{2 d} \,  
\frac{\bm e_{(\alpha,\beta)}^\top ( [(\mathbb L^\textrm{ph}_{\bm b}(\epsilon))^{-1}]^{cc} - [\bm L_{\bm b}^{cc}]^{-1}) \bm e_{(\alpha,\beta)}}{1-
b_{\alpha \beta} \bm e_{(\alpha,\beta)}^\top [\bm L_{\bm b}^{\rm cc}]^{-1}(\epsilon) \bm e_{(\alpha,\beta)}}  \, . \qquad
\end{IEEEeqnarray}
We still need to calculate $\lim_{\epsilon \rightarrow 0} \, \bm e_{(\alpha,\beta)}^\top  [(\mathbb L^{\textrm ph}_{\bm b}(\epsilon) )^{-1}]^{cc}
\bm e_{(\alpha,\beta)}$. We do this using matrix perturbation theory~\cite{Stewart},
which  expands   eigenvalues and eigenvectors of $\mathbb L^\textrm{ph}_{\bm b}(\epsilon)$
in a power series in the small parameter $\epsilon$. From \eqref{eq:DC PF2}
we obtain, to leading order in $\epsilon$~\cite{Stewart},
\begin{IEEEeqnarray}{ccc}\label{eq:1}
\lambda_l(\epsilon) &=& \lambda_l(0) + \epsilon \sum_{i=1}^g [u_i^{(l)}(0)]^2 + {\cal O}(\epsilon^2) \, , \\ \label{eq:2}
u_i^{(l)}(\epsilon) &=& u_i^{(l)}(0) + \epsilon \sum_{l' \ne l} C_{l,l'} \, u_i^{(l')}(0) + {\cal O}(\epsilon^2)\, ,  \\ \label{eq:3}
C_{l,l'} &=& \frac{\sum_{j=1}^g u_j^{(l')}(0) \, u_j^{(l)}(0)}{\lambda_l(0)-\lambda_{l'}(0)} 
\end{IEEEeqnarray}
in terms of the eigenvectors $\bm u^{(l)}(0)$ of $\bm L_{\bm b}^\textrm{ph}$.
We use this to write 
\begin{IEEEeqnarray}{ccc}\label{eq:invL}
\mathbb L^\textrm{ph}_{\bm b}(\epsilon)^{-1}& = & \sum_{l \ge 1} \frac{\bm u^{(l)}(\epsilon) \, (\bm u^{(l)}(\epsilon))^\top}{\lambda_l(\epsilon)} \, .
\end{IEEEeqnarray}
Equation \eqref{eq:invL} finally allows us to calculate the contribution of the $l=1$ marginally stable mode to 
$\bm e_{(\alpha,\beta)}^\top  [(\mathbb L^\textrm{ph}_{\bm b}(\epsilon))^{-1}]^{cc} \bm e_{(\alpha,\beta)}$. Using \eqref{eq:1}--\eqref{eq:3}
and $\bm e_{(\alpha,\beta)}^\top \, \bm u^{(1)}(0)=0$, that contribution is 
\begin{IEEEeqnarray}{ccc}
\frac{\epsilon^2 \sum_{l',l'' \ne l} C_{l,l'} C_{l,l''}  [e_{(\alpha,\beta)}^\top \, \bm u^{(l')}(0)][(\bm u^{(l'')}(0))^\top \, e_{(\alpha,\beta)}] +{\cal O}(\epsilon^3)}{\epsilon \sum_{i=1}^g [u_i^{(l)}(0)]^2 + {\cal O}(\epsilon^2)} \nonumber ,
\end{IEEEeqnarray}
which goes linearly to zero with $\epsilon \rightarrow 0$. Higher order terms not considered in \eqref{eq:1}--\eqref{eq:3}
vanish even faster and accordingly they have been neglected. 
To calculate $\bm e_{(\alpha,\beta)}^\top  [(\mathbb L^\textrm{ph}_{\bm b})^{-1}]^{cc} \bm e_{(\alpha,\beta)}$ in \eqref{eq:PPP}
the $l=1$ term can therefore be omitted in \eqref{eq:invL}, and one finally obtains
\begin{IEEEeqnarray}{c}
\lim_{\epsilon \rightarrow 0} \bm e_{(\alpha,\beta)}^\top  [(\mathbb L^\textrm{ph}_{\bm b}(\epsilon))^{-1}]^{cc} \bm e_{(\alpha,\beta)} =
\Omega_{\alpha \beta} \, .
\end{IEEEeqnarray}
Inserting this into \eqref{eq:PPP} and noting that $P_{\alpha,\beta} = 
b_{\alpha \beta} (\theta_{c,\alpha}^\star-\theta_{c,\beta}^\star)$ one finally obtains \eqref{eq:L2 coherence gen-gen contingency}. 

The proof of the Proposition for a line fault between a passive and an active node proceeds along similar steps as
for a line fault between two passive nodes in the physical network, and we do
not present it here. 
\end{proof}

The results of Proposition \ref{prop:angle coherence contingency} show that for the three types of line contingencies, the voltage angle deviation from the nominal operating 
point is proportional to the square of the power flowing on the line prior to the fault, times a topological factor. 
The latter is equal to the resistance distance when the faulted line 
connects two synchronous machines \eqref{eq:L2 coherence gen-gen contingency}.
The resistance distance $\Omega_{\alpha\beta}$, and accordingly the response $\mathcal{P}$, will be greater for lines
such that there are few alternative paths connecting $\alpha$ to $\beta$ beyond the direct line.
For the other two types of line faults, \eqref{eq:L2 coherence con-con contingency} and \eqref{eq:L2 coherence gen-con contingency},
the resistance distance factor of \eqref{eq:L2 coherence gen-gen contingency}
is complemented by terms which account for the topology of the network of passive nodes, with no straightforward 
interpretation, except for the term
${\bm e}^\top_{(\alpha,\beta)}[{\bm L}_{\bm b}^{cc}]^{-1}{\bm e}^{\phantom{\top}}_{(\alpha,\beta)}$
in \eqref{eq:L2 coherence con-con contingency}, which is equal to the resistance distance between $\alpha$ and $\beta$
in the network of passive nodes augmented by a ground node (see Thm.~3.9 of Ref.~\cite{Dorfler13Kron}).
Finally, the denominator in \eqref{eq:L2 coherence con-con contingency} is much smaller than one, yielding large responses 
$\mathcal{P}$, for lines $\alpha-\beta$ between weakly connected components of the network of passive nodes.
Remember however that the above results are no longer valid for a line fault disconnecting the network.

\begin{figure*}[h]
\centering
  \includegraphics[width=\textwidth]{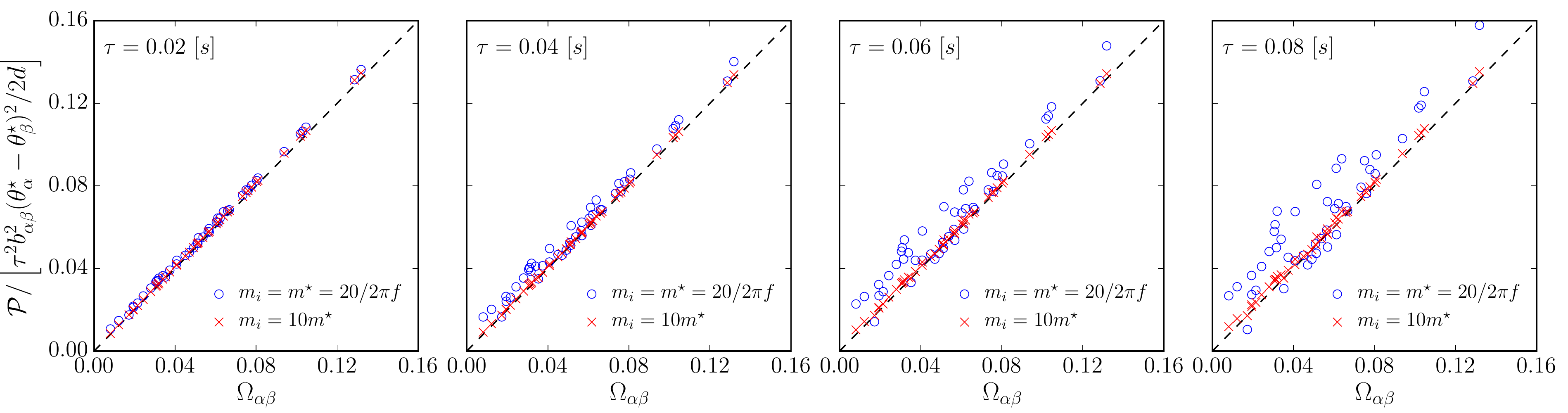}
 \caption{
 Transient angle coherence measure {\color{blue} \eqref{eq:primcontrol} }for a line contingency as a function of the resistance distance separating the 
 nodes of the faulted line.
 Each data point corresponds to the fault of a line connecting two generators in the physical network.
 Simulation parameters: IEEE 118-bus test case with uniform inertia at all nodes, $f=50$ Hz,
 $d_i/m_i=0.5~[s^{-1}]$, and  $m_i=m^\star=2H/2\pi f$, $H=10~[s]$ (typical values from \cite{Kundur}, blue circles)
 and  $m_i = 10m^\star$ (red crosses).
 From left to right: fault clearing times $\tau$ corresponding to $1,2,3,$ and $4$ AC cycles.
 The straight line is the theoretical prediction \eqref{eq:L2 coherence gen-gen contingency}.}
 \label{fig:Figure N-1 angle}
\end{figure*}

\begin{proposition}[\bf{Primary control effort under line contingency}]
Consider the Kron reduced power system model of \eqref{eq:Swing kron} with
$\bm L_{\rm red} \rightarrow L_{\rm red}(\epsilon) = \bm L_{\rm red} +\epsilon \mathbb I$, $\epsilon>0$, and satisfying 
Assumption \ref{ass:Homogeneous inertia damping ratio}.
The primary control effort~\cite{Poola17} 
\begin{equation}
\label{eq:primcontrol}
\mathcal{P}_{\bm  \omega}=\int_0^\infty \sum_id_i\omega_i^2\dd t 
\end{equation}
required during the transient caused by a line contingency modeled 
by \eqref{eq: Laplacian N-1} is given by
\begin{IEEEeqnarray}{ll}\label{eq:L2 primary effort gen-gen contingency}
\mathcal{P}_{\bm  \omega}  = \frac{{P^2_{\alpha,\beta}} \, \tau^2 }{2}(m_\alpha^{-1}+m_\beta^{-1})\,,
\end{IEEEeqnarray}
if the faulted line connects two synchronous machines, by
\begin{IEEEeqnarray}{l}\label{eq:L2 primary effort con-con contingency}
\mathcal{P}_{ \bm  \omega} =\frac{{P^2_{\alpha,\beta}} \, \tau^2 }{2} 
 \frac{\sum_{i\in\mathcal{N}_g}m_i^{-1}{\left[{\bm e}^\top_{(\alpha,\beta)}{[{\bm L}_{\bm b}^{cc}}]^{-1}{\bm L}_{\bm b}^{cg}{\bm e}_i\right]}^2}{[1-b_{\alpha\beta}{\bm e}^\top_{(\alpha,\beta)}[{\bm L}_{\bm b}^{cc}]^{-1}{\bm e}^{\phantom{\top}}_{(\alpha,\beta)}]^2}\,,
 \end{IEEEeqnarray}
if the faulted line connects two passive nodes, and by
\begin{IEEEeqnarray}{l}\label{eq:L2 primary effort gen-con contingency}
\mathcal{P}_{\bm  \omega} =\frac{{P^2_{\alpha,\beta}} \, \tau^2 }{2}
 \frac{\sum_{i\in\mathcal{N}_g}m_i^{-1}{\left[\delta_{i\alpha}+{\bm e}_{\beta}^\top[{{\bm L}_{\bm b}^{cc}}]^{-1}{\bm L}_{\bm b}^{cg}{\bm e}_{i}\right]}^2}
 {[1-b_{\alpha\beta}[{L}_{\bm b}^{cc}]_{\beta\beta}^{-1}]^2}\,,\quad
\end{IEEEeqnarray}
if the faulted line connects a synchronous machine $\alpha$ and a passive node $\beta$.
In \eqref{eq:L2 primary effort gen-gen contingency}--\eqref{eq:L2 primary effort gen-con contingency}, 
$P_{\alpha,\beta}$ is the power flow on the $\alpha-\beta$ line
prior to the fault.
\end{proposition}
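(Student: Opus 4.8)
The plan is to recognize that the primary control effort \eqref{eq:primcontrol} is a purely frequency-based performance measure and therefore falls under Proposition \ref{prop: Obs Frequency}. Indeed, $\sum_i d_i\omega_i^2=\bm\omega^\top\bm D\bm\omega$, which corresponds to the choice $\bm Q^{(1,1)}=0$, $\bm Q^{(2,2)}=\bm D$ in \eqref{eq:objective function}, applied to the Kron-reduced system \eqref{eq:Swing kron}. Under Assumption \ref{ass:Homogeneous inertia damping ratio} one has $\bm D=\gamma\bm M$, so the matrix entering \eqref{eq:X22 frequency} simplifies dramatically: $\bm M^{-1/2}\bm Q^{(2,2)}\bm M^{-1/2}=\gamma\mathbb{I}$, and since $\bm T_\textrm{M}$ is orthogonal, $\bm T_\textrm{M}^\top\bm M^{-1/2}\bm Q^{(2,2)}\bm M^{-1/2}\bm T_\textrm{M}=\gamma\mathbb{I}$ as well.

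The crucial observation is that this trivializes the observability Gramian. In \eqref{eq:X22 frequency} the factor $(\bm T_\textrm{M}^\top\bm M^{-1/2}\bm Q^{(2,2)}\bm M^{-1/2}\bm T_\textrm{M})_{lq}=\gamma\delta_{lq}$ kills every off-diagonal ($l\neq q$) contribution, while for $l=q$ the bracketed factor reduces to $1/2\gamma$ irrespective of the eigenvalues $\lambda_l^\textrm{M}$ (as already noted in Remark \ref{Remark on expressions for the obs gramian}). Summing over the single surviving index and using orthogonality of $\bm T_\textrm{M}$, I expect
\begin{equation}
\bm X^{(2,2)}=\frac{1}{2}\,\mathbb{I}\,,
\end{equation}
independently of $\epsilon$. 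In contrast to the angle-coherence case of Proposition \ref{prop:angle coherence contingency}, no divergence arises and the limit $\epsilon\rightarrow0$ is immediate. Because the input vector $\bm B$ has a vanishing upper block in all three line-fault scenarios \eqref{eq:Sol N-1 case1}, \eqref{eq:Sol N-1 case2}, \eqref{eq:Sol N-1 case3}, the performance measure \eqref{eq:objective function 2} collapses to $\mathcal{P}_{\bm\omega}=\bm B^\top\bm X\bm B=\frac{1}{2}\,\bm b^\top\bm b$, where $\bm b$ denotes the nonzero (lower) block of $\bm B$.

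It then remains to evaluate $\frac{1}{2}\|\bm b\|^2$ in each case. In every scenario $\bm b$ has the form $c\,\bm M^{-1/2}\bm r$, where the scalar $c$ collects the angle difference across the faulted line together with the Sherman-Morrison denominator, rewritten through $P_{\alpha,\beta}=b_{\alpha\beta}(\theta_\alpha^\star-\theta_\beta^\star)$, and where $\bm r\in\mathbb{R}^{|\mathcal{N}_g|}$ carries the topological content. Then $\|\bm b\|^2=c^2\,\bm r^\top\bm M^{-1}\bm r=c^2\sum_{i\in\mathcal{N}_g}m_i^{-1}r_i^2$. For two synchronous machines \eqref{eq:Sol N-1 case1}, $\bm r=\bm e_{(\alpha,\beta)}$ and the diagonal form of $\bm M^{-1}$ gives $\bm e_{(\alpha,\beta)}^\top\bm M^{-1}\bm e_{(\alpha,\beta)}=m_\alpha^{-1}+m_\beta^{-1}$, yielding \eqref{eq:L2 primary effort gen-gen contingency}. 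For the two remaining cases \eqref{eq:Sol N-1 case2} and \eqref{eq:Sol N-1 case3}, one reads off $\bm r=\bm L_{\bm b}^{gc}[\bm L_{\bm b}^{cc}]^{-1}\bm e_{(\alpha,\beta)}$ and $\bm r=\bm e_\alpha+\bm L_{\bm b}^{gc}[\bm L_{\bm b}^{cc}]^{-1}\bm e_\beta$ respectively, and I extract the components $r_i=\bm e_i^\top\bm r$ using the symmetry relations $\bm L_{\bm b}^{cg}=(\bm L_{\bm b}^{gc})^\top$ and $([\bm L_{\bm b}^{cc}]^{-1})^\top=[\bm L_{\bm b}^{cc}]^{-1}$ to match the bracketed quantities in \eqref{eq:L2 primary effort con-con contingency} and \eqref{eq:L2 primary effort gen-con contingency}.

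The main labor here is purely bookkeeping: carrying the Sherman-Morrison denominators (already fixed in \eqref{eq:Sol N-1 case2} and \eqref{eq:Sol N-1 case3}) through the squared norm, and transposing the off-diagonal Laplacian blocks correctly so that $\bm e_i^\top\bm L_{\bm b}^{gc}[\bm L_{\bm b}^{cc}]^{-1}\bm e_{(\alpha,\beta)}$ is rewritten as $\bm e_{(\alpha,\beta)}^\top[\bm L_{\bm b}^{cc}]^{-1}\bm L_{\bm b}^{cg}\bm e_i$. I expect no genuine analytical obstacle, since the entire difficulty of the angle-coherence proof — the $\epsilon$-regularization of the marginally stable mode and the cancellation of the resulting pole — evaporates the moment the Gramian is seen to equal $\frac{1}{2}\mathbb{I}$. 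The frequency response is thus a trivial projection, and the topological structure of the three results is inherited entirely from the input vectors $\bm B$.
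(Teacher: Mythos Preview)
Your proposal is correct and follows essentially the same approach as the paper's proof: the paper also obtains $\bm X^{(2,2)}=\tfrac{1}{2}\mathbb{I}$ from \eqref{eq:X22 frequency} with $\bm Q^{(2,2)}=\bm D$, and then computes $\bm B^\top\bm X\bm B$ for each of the three $\bm B$'s in \eqref{eq:Sol N-1 case1}--\eqref{eq:Sol N-1 case3} using $\bm M^{-1}=\sum_i m_i^{-1}\bm e_i\bm e_i^\top$. Your write-up is in fact more explicit than the paper's, spelling out the $\gamma\delta_{lq}$ collapse in \eqref{eq:X22 frequency} and the transposition $\bm e_i^\top\bm L_{\bm b}^{gc}[\bm L_{\bm b}^{cc}]^{-1}\bm e_{(\alpha,\beta)}=\bm e_{(\alpha,\beta)}^\top[\bm L_{\bm b}^{cc}]^{-1}\bm L_{\bm b}^{cg}\bm e_i$, both of which the paper leaves implicit.
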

\begin{proof}
From \eqref{eq:X22 frequency} with $\bm Q^{(2,2)}= \bm D$, one obtains 
the observability Gramian associated to the primary control effort as $\bm X^{(2,2)}=\mathbb{I}/2$, independently of $\epsilon$.
To compute $\mathcal{P}=\bm B^\top \bm X\bm B$ for the above three types of lines contingencies, we use $\bm B$ defined in \eqref{eq:Sol N-1 case1},
\eqref{eq:Sol N-1 case2} and \eqref{eq:Sol N-1 case3} respectively.
After some straightforward matrix multiplications, and rewriting the diagonal matrix of the synchronous machines inertia
as $\bm M^{-1}=\sum_i m_i^{-1}{{\bm e}}_i{{\bm e}}^\top_i$ for 
${{\bm e}}_i \in \mathbb{R}^{|\mathcal{N}_g|}$, one obtains \eqref{eq:L2 primary effort gen-gen contingency}, \eqref{eq:L2 primary effort con-con contingency}
and \eqref{eq:L2 primary effort gen-con contingency}.
\end{proof}

Equation \eqref{eq:L2 primary effort gen-gen contingency} shows that the effort of primary control which results from the outage of the $\alpha-\beta$ line is proportional to the square of the
power flowing on the line prior to the fault times the prefactor $(m_\alpha^{-1}+m_\beta^{-1})$.
The primary control effort is therefore large if the rotational inertias of the synchronous machines at
both ends of the faulted line are small.
For the other types of line contingencies, \eqref{eq:L2 primary effort con-con contingency} and 
\eqref{eq:L2 primary effort gen-con contingency} predict a more
involved dependence of $\mathcal{P}$ with the inertias of the synchronous machines.
Quite interestingly, for both \eqref{eq:L2 primary effort con-con contingency} and 
\eqref{eq:L2 primary effort gen-con contingency}, only the inertias of the synchronous machines directly connected to the passive 
nodes matter. 
This is easily seen in \eqref{eq:L2 primary effort con-con contingency} and \eqref{eq:L2 primary effort gen-con contingency}
noticing that for ${{\bm e}}_i \in\mathbb{R}^{|\mathcal{N}_g|}$, ${\bm L}_{\bm b}^{cg}{{\bm e}}_i=0$
if the $i^\textrm{th}$ synchronous machine is not connected to any of the passive nodes.
Furthermore, the contribution of the inertias of the synchronous machines connected to the passive nodes is weighted by 
a network topology dependent term $\propto [{{\bm L}_{\bm b}^{cc}}]^{-1}{\bm L}_{\bm b}^{cg}$.

For comparison, we note that the primary control effort to restore synchrony in the case of a power injection 
perturbation only depends on the amplitude of the perturbation and on the inertia of the machine at the perturbed node, 
and not on the steady state operating conditions~\cite{Poola17}.
For line faults, our results show that the performance measure depends on the square of the power flowing on the line
prior to the fault times an inertia dependent contribution containing the inertias connected at the two ends of the faulted line.

\section{Performance measures under line contingencies: Numerical Analysis}\label{sec:Numerics}

\begin{figure*}[!ht]
\centering
  \includegraphics[width=\textwidth]{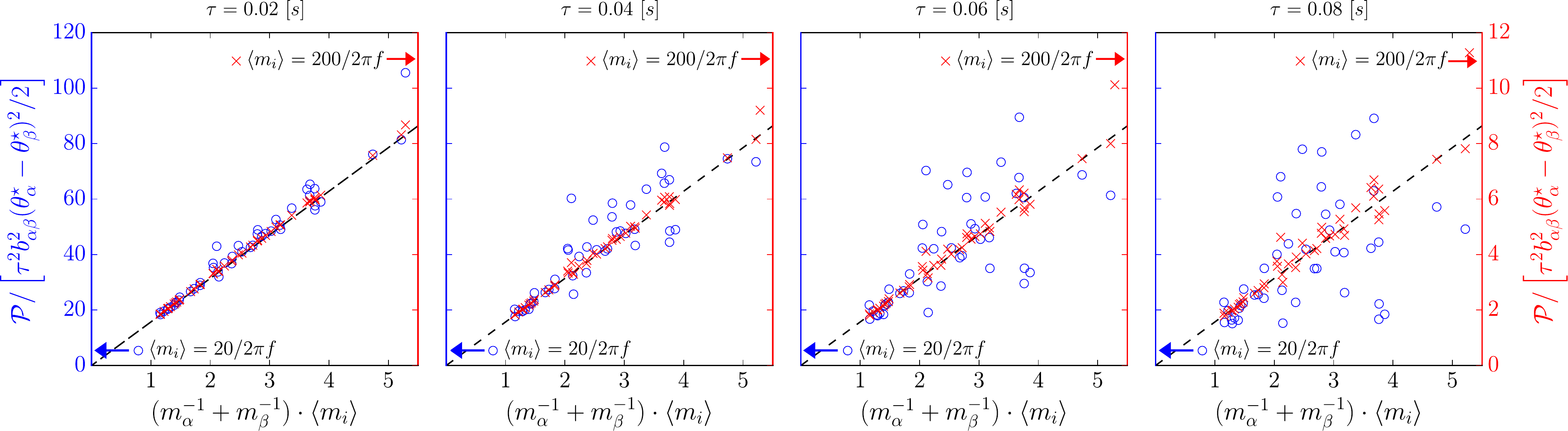}
 \caption{Primary control effort {\color{blue}\eqref{eq:primcontrol} } required during a transient resulting from a 
 line contingency as a function of the sum of 
 the inverse inertias of the synchronous machines at both ends of the faulted line. 
 Each data point corresponds to the fault of a line connecting two generators in the physical network.
 Simulation parameters: IEEE 118-bus test case with inertias uniformly distributed in the interval
 $[0.2 \langle m \rangle,1.8 \langle m \rangle]$ with $\langle m\rangle=2H/2\pi f$, $H=10~[s]$ (typical values from \cite{Kundur}, blue circles, left vertical scale)
 and $\langle m\rangle=200/2\pi f$ (red crosses, right vertical scale), $f=50$ Hz, and $d_i/m_i=0.5~[s^{-1}]$. 
 From left to right: fault clearing times $\tau$ corresponding to $1,2,3,$ and $4$ AC cycles.
 The straight line is the theoretical prediction \eqref{eq:L2 primary effort gen-gen contingency}.}
 \label{fig:Figure N-1 frequency}
\end{figure*}

To illustrate our results we numerically investigate the IEEE 118-bus 
test case \cite{IEEEtestcase}.
We simulate the swing dynamics 
for the reduced model \eqref{eq:Swing kron}, where all PQ buses have been eliminated by Kron reduction.
To model temporary line disconnections, we consider time-dependent network Laplacians,
\begin{equation}\label{eq:Circuit breakers Laplacian}
{\bm L}^\textrm{ph}_{\bm b}(t)={\bm L}^\textrm{ph}_{\bm b}+\Theta(t)\Theta(\tau-t)b_{\alpha\beta}{\bm e}^{\phantom{\top}}_{(\alpha,\beta)}{{\bm e}^\top_{(\alpha,\beta)}}\,,
\end{equation}
where $\Theta(t)$ is the Heavyside step function, $\tau$ is the clearing time, and $\alpha-\beta$
is the faulted line. Because our theoretical predictions were derived for Dirac-$\delta$ perturbations, 
we expect numerical data to confirm our theory for short enough $\tau$. 

We perform numerical simulations for all possible line contingencies in the network.
For each contingency simulation we evaluate numerically the performance measures
$\int_0^\infty\sum_i \varphi_i^2(t) \dd t$ and $\int_0^\infty \sum_i d_i\omega_i^2(t)\dd t$.

Fig.~\ref{fig:Figure N-1 angle} plots the angle coherence measure
rescaled by the square of the power flowing on the line prior to the fault 
for all lines connecting two active nodes in the physical network. 
As predicted by \eqref{eq:L2 coherence gen-gen contingency},
this quantity is linear in the resistance distance.
The validity of the theory for Dirac-$\delta$ perturbations extends to longer clearing times 
for larger inertias (red crosses). 
This is so because for larger inertia, the transient voltage angle oscillations are quickly absorbed relatively close to 
the faulted line and do not have the time to propagate to distant nodes before the fault is cleared.
As a matter of fact, the eigenmode $i$ of the network Laplacian has a characteristic excitation time scale 
$d/\lambda_i=\gamma m/\lambda_i$. When the fault duration is shorter than this time scale
the $i^{\rm th}$ mode is not excited by the fault. Increasing the inertia therefore leads to less and less excited modes, until
$\tau < \gamma m/\lambda_2$ at which point no mode is excited and the disturbance is effectively infinitesimally short.
For our parameters, we find that $\gamma m/\lambda_2=0.025 s$ for the low inertia data and $\gamma m/\lambda_2=0.25 s$ 
for the large inertia data in Fig.~\ref{fig:Figure N-1 angle}. The breakdown of our theory for 
$\tau \gtrsim 0.04 s$ in Fig.~\ref{fig:Figure N-1 angle} therefore indicates that our theory is valid 
for $\tau < \gamma m/\lambda_2$.
These results suggest that for longer clearing times, and more generally for perturbations that are extended in time, 
alternative approaches to the observability Gramian are needed to accurately evaluate performance 
measures~\cite{tyloo2017robustness}. 

Fig.~\ref{fig:Figure N-1 frequency} plots the primary control effort
rescaled by the square of the power flowing on the line prior to the fault for all lines connecting two active nodes
in the physical network. 
For heterogeneous inertias and sufficiently short clearing times 
\eqref{eq:L2 primary effort gen-gen contingency} predicts that this quantity scales linearly with $(m_\alpha^{-1}+m_\beta^{-1})$.
This prediction is confirmed for sufficiently large inertias.
For lower values of inertia, the linear tendency holds for short clearing times, but breaks down 
for longer faults.
Note that the red crosses in Figs.~\ref{fig:Figure N-1 angle} and \ref{fig:Figure N-1 frequency} correspond to 
somehow exaggerated values of inertia. They are here to illustrate that our theoretical predictions  
remain valid for longer fault clearing times.

Finally, Fig.~\ref{fig:detail} shows the angle coherence and the primary control effort measures for all possible line contingencies
(170 lines in total, of which 46 connect two synchronous machines, 35 connect two passive nodes and 89 connect a passive node to a 
synchronous machine) and $\tau=20$~ms. The transmission lines are sorted according to the square of the power flowing
on the line in normal operation. 
Numerical results confirm that our theoretical predictions 
for the angle coherence measure, 
\eqref{eq:L2 coherence gen-gen contingency}--\eqref{eq:L2 coherence gen-con contingency}, as well as 
 for the primary control effort,
\eqref{eq:L2 primary effort gen-gen contingency}--\eqref{eq:L2 primary effort gen-con contingency}. 

Remarkably, the transient performance is not a monotonic function of the 
power load of the faulted line, the square of which is indicated by the black line in Fig.~\ref{fig:detail}. 
We observe that the most critical lines are not always the most highly loaded ones.
For the angle coherence measure (left panel in Fig.~\ref{fig:detail}), the line carrying the $6^\textrm{th}$ largest power leads to the largest
integrated transient excursions even though it carries 30$\%$ less power than the line carrying the most power.
We saw (but do not show here) this non monotonic behavior also when lines are sorted according to their relative load 
(the load relative to their capacity). 
A similar non monotonicity is observed for the primary control effort (right panel in Fig.~\ref{fig:detail}).
The fault of the line carrying the $3^\textrm{rd}$ largest power causes the largest primary 
control effort, and the line carrying 44$\%$ of the largest transmitted power is the $4^\textrm{th}$ most critical one.

\begin{figure*}[!hbt]
\centering
  \includegraphics[width=\columnwidth]{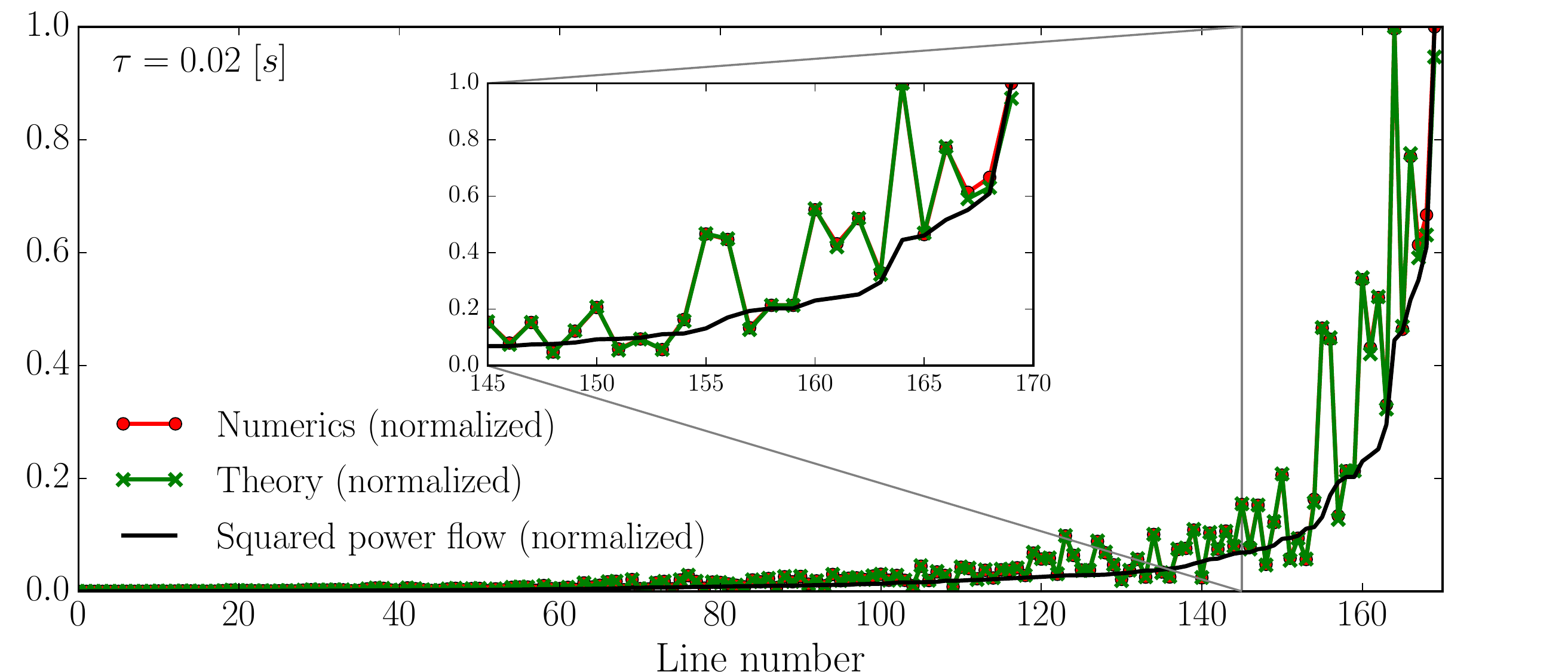}
  \includegraphics[width=\columnwidth]{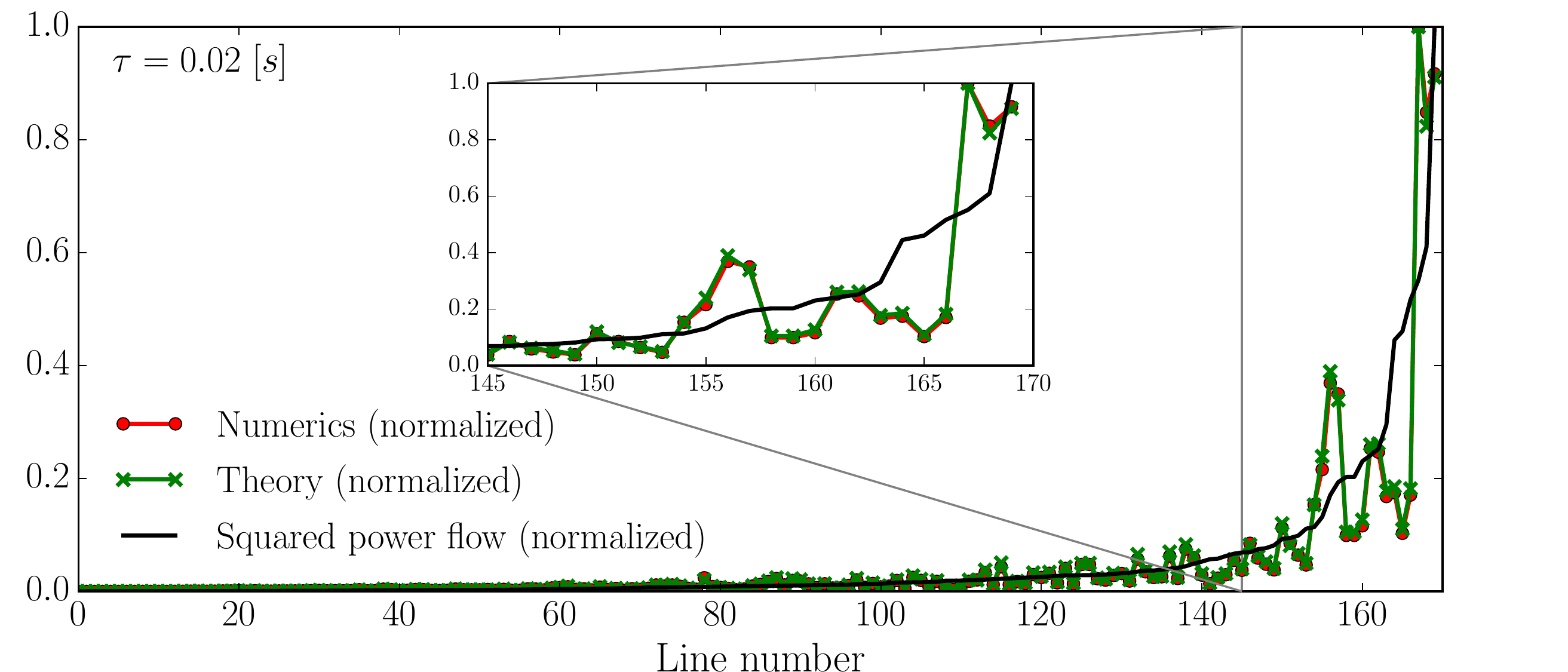}
 \caption{Plot of the normalized performance measure (red), theoretical prediction (green), 
 and square of the power flowing on the line prior to the fault (black) for line contingencies with clearing times $\tau = 0.02~[s]$.
 Transmission lines are ordered according to the power flowing on the line prior to the fault. 
 Left panel: angle coherence measure, uniform inertias $m_i=m^\star=2H/2\pi f$, $H=10~[s]$, $f=50$ Hz, and $d_i/m_i=0.5~[s^{-1}]$.
 Right panel: primary control effort, inertias uniformly distributed in the interval 
 $[0.2 \langle m \rangle,1.8 \langle m \rangle]$ with $\langle m\rangle=2H/2\pi f$, $H=10~[s]$, $f=50$ Hz, and $d_i/m_i=0.5~[s^{-1}]$.}
 \label{fig:detail}
\end{figure*}

We finally note that numerical calculations not shown here 
for line faults of duration $\tau=20 ms$ indicate that performance measures
calculated over the full, non-reduced network corroborate the results presented above for a large range of values
of the damping parameters $d_i$ on the inertialess buses.

\section{Conclusion}\label{sec:Conclusion}

The standard formalism used until now to evaluate performance measures of electric power grids
was restricted to nodal perturbations~\cite{Gayme15,Gayme16,Poola17} and we extended it to line perturbations
using a regularization procedure. The latter  stabilizes the otherwise marginally stable Laplacian mode during the 
calculation and allows us to identify cancellations of divergences -- it suppresses removable singularities in the calculation.
We showed numerically that, despite its restriction to Dirac-$\delta$ perturbation (instantaneous in time), the formalism
correctly evaluates performance measures even in the physically relevant case of perturbations with finite, but not too long duration. 
One would naively guess that the most critical lines are those that are the most heavily loaded,
either relatively to their thermal limit or in absolute value. Quite surprisingly, we found that faults on lines
transmitting less than half of the heaviest line load in the network sometimes 
require more primary effort control or perturb the network's coherence more than lines with higher loads. 

Future works should investigate nodal $N-1$ faults, where a bus with all its connected lines is removed from the network. 
Another possible direction would be to consider inertialess nodes with first order dynamics, modeling
droop controlled inverters connecting PV productions to the grid.

\section*{Acknowledgment}
 We thank B.~Bamieh and F.~D\"orfler for useful discussions. This work was supported by the Swiss National Science Foundation
 under an AP Energy Grant.

\vspace{-1cm}

\begin{IEEEbiographynophoto}{Tommaso Coletta}
 Received the M.Sc. degree in physics and the Ph.D. degree in theoretical physics from the 
 Ecole Polytechnique F\'ed\'erale de Lausanne (EPFL), Lausanne, Switzerland in 2009 and 2013 respectively.
 He has been a Postdoctoral researcher at the Chair of Condensed Matter Theory at the Institute of Theoretical Physics of EPFL.
 Since 2014 he is a Postdoctoral researcher at the Engineering department of the University of 
 Applied Sciences of Western Switzerland, Sion, Switzerland working on complex networks and power systems.
 \end{IEEEbiographynophoto} 
\vspace{-1cm}
 \begin{IEEEbiographynophoto}{Philippe Jacquod}
 Philippe Jacquod received the Diplom degree in theoretical physics from the ETHZ, Z\"{u}rich, Switzerland, in 1992, and the PhD degree in 
 natural sciences from the University of Neuch\^{a}tel, 
 Switzerland, in 1997. He is a professor with the Energy and Environmental Engineering department,
 School of Engineering,
 University of Applied Sciences of Western Switzerland, Sion, Switzerland, with a joint appointment with the Department of Quantum Matter Physics, University of Geneva, Switzerland. From 2003 to 2005 he was an assistant professor with the 
 theoretical physics department, University of Geneva, 
 Switzerland and from 2005 to 2013 he was a professor 
 with the physics department, University of Arizona, Tucson, USA. His main research topics is in power systems and how they
 evolve as the energy transition unfolds. He is co-organizing an international conference series on related topics. He has published about 100 
 papers in international journals, books and conference proceedings. 
 \end{IEEEbiographynophoto}

\end{document}